\theoremstyle{plain}
\newtheorem{theorem}{Theorem}[section]
\newtheorem{lemma}[theorem]{Lemma}
\theoremstyle{definition}
\newtheorem{remark}[theorem]{Remark}
\newtheorem{ass}{Assumption}[section]
\newcommand{\up}{\upshape}
\def\vv<#1>{\langle#1\rangle}
\newcommand{\Om}{\Omega}
\newcommand{\todo}[1]{\vspace{5 mm}\par \noindent
\marginpar{\textsc{ToDo}} \framebox{\begin{minipage}[c]{0.95
\textwidth}\raggedright \tt #1 \end{minipage}}\vspace{5 mm}\par}
\def\vv<#1>{\langle#1\rangle}
\newcommand{\revise}[1]{#1} 
\newcommand\deleteF{\bgroup\markoverwith{\textcolor{blue}{\rule[0.8ex]{2pt}{0.9pt}}}\ULon}
\newcommand\deleteS{\bgroup\markoverwith{\textcolor{green}{\rule[0.8ex]{2pt}{0.9pt}}}\ULon}
\title[Estimation of future discretionary benefits]%
{
Estimation of future discretionary benefits in traditional life insurance
}
\author[Florian Gach \; \& \; Simon Hochgerner]{Florian Gach \qquad\&\qquad Simon Hochgerner}
\address{Austrian Financial Market Authority (FMA),
  Otto-Wagner Platz 5, A-1090 Vienna
} 
\email{florian.gach@fma.gv.at}
\email{simon.hochgerner@fma.gv.at}
\thanks{\emph{Disclaimer.} 
The opinions expressed in this article are those of the authors and do not necessarily reflect the official position of the Austrian Financial Market Authority. }
\keywords{Solvency II, Future Discretionary Benefits, Market Consistent Valuation}
\begin{document}

\begin{abstract}
In the context of life insurance with profit participation, the future discretionary benefits ($FDB$), which are a central item for Solvency~II reporting, are generally calculated by computationally expensive Monte Carlo algorithms.
We derive analytic formulas to estimate lower and upper bounds for the $FDB$. This yields an estimation interval for the $FDB$, and the average of lower and upper bound is a simple estimator. These formulae are designed for real world applications, and we compare the results to publicly available reporting data.
\end{abstract}

\maketitle


\section{Introduction}
As of 1.\ 1.\ 2016 the European Union has implemented a new regulatory regime (Solvency~II \cite{L1, L2, L3templ}) which requires insurance companies operating in Member States  to assign market consistent values to their balance sheet items. This requirement concerns, both, assets and liabilities and is therefore a full balance sheet approach. 

Market consistent valuation in the context of life insurance with profit participation has been a developing subject over the past two decades. Early works in this context include \cite{SS04, OBrien09, Delong,  W16}. These works all highlight the interdependencies that exist between the insurer's asset portfolio and the policyholder's expected payoff. Indeed, the defining feature of with profit contracts is that the policyholder's benefit is a sum of, firstly, a guaranteed part depending, in particular, on the guaranteed technical interest rate, and, secondly, a bonus benefit depending, in particular, on the performance of the company's asset portfolio. 

To obtain a more realistic model, the performance of the asset portfolio should be measured as a return on balance sheet items and, accordingly, there are recent works which incorporate a strongly simplified, or stylized, version of a balance sheet projection (\cite{Gerstner08, Gerstner09, Bacin21, Engsner_etal17, FN21}). 

However,  to have a \revise{more accurate} representation of the company's financial income, which is to be shared with policyholders, the return on assets should be derived according to local generally accepted accounting principles (local GAAP). This necessity is explained in detail in \cite{Dorobantu_etal20, D_etal17}. Moreover, financial revenue can be controlled to a certain extent by management actions by realizing unrealized gains of individual assets, i.e.\ the difference of market and local GAAP book value (\cite{Dorobantu_etal20}). Together with further management actions (e.g., strategic asset allocation, reinvestment strategy,  profit sharing and profit declaration) this setup leads, in practice, to computationally expensive Monte Carlo algorithms in order to obtain  realistic calculations of market consistent values of liabilities. See \cite{VELP17, Dorobantu_etal20, D_etal17, Albrecher_etal18}. 

Under Solvency~II the market consistent value of liabilities is defined as a sum of a best estimate and a risk margin (\cite[Article~77]{L1}). For the purpose of life insurance with profit participation, the risk margin is generally several orders of magnitudes smaller than the best estimate (\cite{eiopa_is}), as shown in 
Table~\ref{table:RM}.
\begin{table}[H]
\[
\begin{matrix}
		& 2016 &	 2017 &	 2018 &	 2019 	& 2020  \\
\textup{Best Estimate} &	 \num{4 759 241.59} 	& \num{4 893 285.81} &	\num{4 812 203.51}	& \num{5 331 803.61} &	\num{5 459 642.07} \\ 
\textup{Risk margin} &	 \num{79 574.17} &	 \num{77 532.75} &	\num{76 709.35} &	\num{83 692.55} & 	\num{89 736.47} \\
 &		1.67\,\%	& 1.58\,\% &	1.59\,\%	&1.57\,\%	&1.64\,\%
\end{matrix}
\]\vspace{1mm}
    \caption{Risk margin vs.\ best estimate. Aggregate of solo companies in EEA from 2016 to 2020 for life insurance (excluding health, index-linked and unit-linked). In million Euro and percent of best estimate. Source: Items R0670 and R0680 in `Balance sheet EEA' in \cite{eiopa_is};}
    \label{table:RM}
\end{table}

The best estimate of liabilities is defined as the expected value of discounted and probability weighted future policyholder and cost cash flows stemming from contracts which are active at valuation time. Thus new business is not considered and the expected value is to be taken with respect to a risk neutral measure. This definition of  \cite[Article~77]{L1} is clarified in \cite{L2}. 

Regarding life insurance with profit participation, the best estimate, denoted henceforth by $BE$, can be split into a sum of two parts: $BE = GB + FDB$; here $GB$ denotes the value of those cash flows which are guaranteed at valuation time while $FDB$ is the value of future discretionary benefits. Both, $FDB$ and the total value $BE$, have to be reported individually by insurance companies on a quarterly basis. The significance of this splitting is that the guaranteed benefits, $GB$, are calculated by methods which are close to classical actuarial computations. If one allows dynamic policyholder behavior such that, e.g., surrender depends on economic scenarios then a stochastic cash flow model is needed to calculate $GB$. Nevertheless, in comparison to the future discretionary benefits, the guaranteed benefits are more straightforward to calculate and validate. This is so because the $FDB$ depends directly on the company's surplus, and thus in particular on  management actions and financial revenue, while $GB$ depends on these quantities only indirectly via the part that is affected by  dynamic policyholder behavior. If a company chooses not to model dynamic policyholder behavior then $GB$ can be calculated by a deterministic actuarial cash flow model. 

Due to the above described complexity there exist no closed formula solutions for $FDB$ calculation in the context of life insurance with profit participation. An analytic formula for an estimate of a lower bound for $FDB$ has been derived in \cite{HG19}, and this has been applied to validate the reported $FDB$ of one major German  life insurance company for the reporting year 2017. The purpose of this paper is to improve and extend the approach of \cite{HG19} in the following ways:
\begin{enumerate}
    \item 
    In equation~\eqref{e:fdb-rep} we derive a representation of the $FDB$ which holds in a generic valuation framework. This representation cannot directly be applied to calculate the $FDB$ since it relies on quantities which are just as difficult to obtain as the $FDB$ itself. However, due to its generality, it can be used to validate the results of a given stochastic cash flow model, and it is the basis for the subsequently described estimations. 
    \item 
    Starting from equation~\eqref{e:fdb-rep} we derive in Section~\ref{sec:LBandUB} estimations of lower and upper bounds, $\widehat{LB}$ and $\widehat{UB}$, for the $FDB$. These estimated bounds are given by the analytic formulas \eqref{e:LB} and \eqref{e:UB}, and all the data needed to evaluate the latter are listed in Table~\ref{tbl:data_hat}.
    \item 
    The constituents of the lower and upper bounds, $\widehat{LB}$ and $\widehat{UB}$, are derived in an economically meaningful way such that there is an interpretation for each term and the  dependency on market conditions at valuation time is plausible. The latter point concerns specifically the dependence on the market view of interest rate volatility. This, apart from the new upper bound, is an improvement over the lower bound formula given in \cite{HG19}.  
    \item 
    Given $\widehat{LB}$ and $\widehat{UB}$ we obtain an estimator, $\widehat{FDB} = (\widehat{LB} + \widehat{UB})/2$, for $FDB$. This estimator is useful if the estimation error $\pm\epsilon$ with $\epsilon =  (\widehat{UB} - \widehat{LB})/2$ is sufficiently small. This can be measured against the market value, $MV_0$, of the company's portfolio at valuation time. 
    \item 
    In Section~\ref{sec:est}, we apply these formulas to publicly available data of a German life insurance company for the reporting years 2017, 2018 and 2019, and compare our results with their numerically calculated value of $FDB$. We find that $\epsilon < \revise{1.5}\,\%\, MV_0$ for all three reporting periods and, moreover, the difference $\delta = FDB-\widehat{FDB}$ satisfies $\delta < 1\,\%\, MV_0$ in all cases. Hence, in all three cases, the true value lies within the estimation interval and the estimation is remarkably accurate, as measured by $\delta$. See Table~\ref{table:Nr 1 }. 
    \item 
    Finally, in Section~\ref{sec:est} we also perform a sensitivity analysis with respect to the parameters that are used in calculations of $\widehat{LB}$ and $\widehat{UB}$ for the German life insurer, and find that the results are quite stable.  
\end{enumerate}

The main tools used to obtain  the representation~\eqref{e:fdb-rep} are the no-leakage principle of \cite{HG19} \revise{(see Remark~\ref{rem:no-leak})} and an evolution equation~\eqref{e:DBSFevol} for the statutory reserves of shared profits. Thus this derivation depends in a generic sense on the local GAAP framework but not on specific management rules. 

The derivation of the estimates, $\widehat{LB}$ and $\widehat{UB}$, depends on a number of assumptions. These are all listed and discussed in Section~\ref{sec:assump}. While the assumptions are tailored to the German and Austrian markets, their derivation is quite generic and is applicable whenever the company's revenue is given by well-defined local GAA Principles (e.g., as discussed for the French market in \cite{Dorobantu_etal20}).  The main observation in this context is that book values are expected to be more stable than market value movements. This conclusion is drawn from accounting principles for book values of assets (e.g. strict lower of cost or market price \cite{Dorobantu_etal20}), on one hand, and the use \revise{of} surplus funds as buffer accounts for statutory reserves \cite{Gerstner08}), on the other hand. 

The bounds \eqref{e:LB} and \eqref{e:UB} can be readily applied to real world data, as we  show in Section~\ref{sec:est}. Immediate and practical applications therefore include the following:
\begin{enumerate}
    \item 
    Internal validation: companies may use $\widehat{LB}$ and $\widehat{UB}$ to validate their $FDB$ calculations, and thus their valuation models. 
    \item
    External validation by parent companies: holdings may wish to validate the valuation models in their subsidiaries.
    \item 
    External validation by supervisors or auditors.
    \item
    Sensitivity analysis. 
    The estimator $\widehat{FDB}$ depends on economic quantities such as the prevailing interest rate and volatilities. These are at the same time important drivers for the full Monte Carlo best estimate calculation. Hence one obtains a closed estimation formula which allows for efficient sensitivity analysis without the necessity of stochastic computation.
\end{enumerate}
Clearly, the validation of the best estimate will be most effective when the control is paired with a statistical analysis of the second order assumptions leading to $GB$ and a verification that the contract specific features, which give rise to the guaranteed benefit cash flows, are correctly implemented. We do not view our estimation formula as a substitute for $FDB$ calculation as required by Solvency~II regulation.

\revise{
The derivation of the bounds \eqref{e:LB} and \eqref{e:UB} relies on a detailed investigation of cash-flows generated by local accounting principles. The notation that is used in this context is quite heavy and therefore collected in the Appendix at the end of the paper. 
}

\section{Framework} \label{sec: framework}
We consider a life insurance company selling traditional life insurance products. `Traditional' in this context means that the company's gross surplus is shared between policyholder, shareholder and tax office. Thus these contracts have a profit sharing feature. 

Further we assume that the life insurance company under consideration is subject to the Solvency II regulatory regime. 
 
Let us fix a yearly time grid, $t = 0, 1, \dots, T$, where $t=0$ corresponds to valuation time and $T$ may be large. For a time dependent quantity, $f_t$, we denote the increments by $\Delta f_t = f_t - f_{t-1}$\label{ref: Delta f}.

\subsection{Book value return on assets}
Let $C_t$ \label{ref: C_t} denote the amount of cash held by the company at time $t$. 
Let further $\mathcal{A}_t$\label{ref: A_t} denote the set of assets, other than the cash account, in the company's portfolio at time $t$. Each asset, $a\in \mathcal{A}_t$, has at each time step a book value, $BV_t^a$\label{ref: BV_t^a}, and a market value, $MV_t^a$\label{ref: MV_t^a}. The difference are the unrealized gains or losses, $UG_t^a = MV_t^a - BV_t^a$\label{ref: UG_t^a}. 
The total portfolio values are correspondingly
\[
 BV_t = \sum_{a\in\mathcal{A}_t}BV_t^a + C_t,\quad 
  MV_t = \sum_{a\in\mathcal{A}_t}MV_t^a + C_t,\quad 
 UG_t = MV_t - BV_t. 
\]\label{ref: BV_t} \label{ref: MV_t} \label{ref: UG_t}
In accordance with Solvency~II regulation, we assume that all market values are determined in an arbitrage free manner. 
Concretely, we assume an interest rate model specified on a filtered probability space $(\Om,\mathcal{F},(\mathcal{F}_t),\mathbb{Q)}$  such that $\mathbb{Q}$ is the risk neutral measure for the money market numeraire $B_t = \Pi_{j=0}^{t-1}(1+F_j)$\label{ref: B_t}, 
where $F_{t}$\label{ref: F_t} is the simple one year forward rate valid between $t$ and $t+1$. 
E.\,g., this setting is satisfied for the LIBOR market model with respect to the spot measure (\cite{BM06}).   
Further, all stochastic processes shall be adapted to $(\mathcal{F}_t)$, and all expected values, $E[\;\cdot\;]$, are with respect to $\mathbb{Q}$. 
The discount factor from $s$ to $t< s$ is given by \label{ref: D(t,s)} 
$
 D(t,s) 
 = \Pi_{j=t}^{s-1}(1+F_j)^{-1} 
 = B_t B_s^{-1}
$\revise{.}
Hence the value of a zero coupon bond paying one unit of currency at $s$ is given by $P(0,s) = E[B_s^{-1}]$, and more generally we have $P(t,s) = E[D(t,s)]$ for $t< s$\label{ref: P(t,s)}. 

Let $a\in\mathcal{A}_{t-1}$ and let $cf_s^a$ be the cash flow generated by $a$ at $s\ge t$. E.\,g., if $a$ is a bond, stock equity position or real estate investment, then $cf_t^a$\label{ref: cf_t^a} would correspond to a coupon or principle  payment, dividend yield or rental revenue, respectively. The cash flow, $cf_s^a$, goes to the cash account, $C_s$, and increases it accordingly. Cash flows can be influenced by market movements (affecting, e.g., dividend payments) or by management decisions. Indeed, the management may decide at any time, $s<T_a$, prior to the asset's maturity to sell the asset and realize the market value $MV_s^a$ as a cash flow. This step is called realization because it converts the unrealized gains $UG_s^a$ into a book value return. If an asset, $a$, is sold at $s<T_a$ then its book value is terminated, $BV_s^a=0$, and its market value is converted to a cash flow, $cf_s^a = (cf_s^a)' + MV_a^s$, where $(cf_s^a)'$ are those cash flows (e.g., coupon payments or dividend yield) that result from holding $a$ over the period $[s-1,s]$. 

The cash flow process, $cf_t^a$, is  an  $(\mathcal{F}_t)$-adapted process.

To avoid notational difficulties, we set $cf_s^a=0 = MV_s^a$ for $s>T_a$.  
At $t-1$ the forward rate $F_{t-1}$ is known,  i.e.\ it is $(\mathcal{F}_{t-1})$-adapted.
We have the relation
\begin{align}
    \label{e:MVa} 
    MV_{t-1}^a 
    &= 
    E\Big[\sum_{s=t}^{T}D(t-1,s)cf_s^a\Big| \mathcal{F}_{t-1}\Big] + E\Big[D(t-1,T)MV_T^a\Big| \mathcal{F}_{t-1}\Big] \\
    &= 
    \Big(1+F_{t-1}\Big)^{-1} \Big(E\Big[cf_t^a\Big|\mathcal{F}_{t-1}\Big] + E\Big[MV_t^a\Big| \mathcal{F}_{t-1}\Big]\Big) \nonumber
\end{align}
where the first line involves the final market value, $MV_T^a$, at the end of the projection which satisfies $MV_T^a = 0$ if $T_a\le T$.   

The book value return, $ROA_t^a$,\label{ref: ROA_t^a}  at $t$ of a single asset $a\in\mathcal{A}_{t-1}$ is
\begin{equation} \label{e:ROA of a single asset}
 ROA_t^a = cf_t^a + \Delta BV_t^a 
\end{equation}
and that of the portfolio is
\begin{equation}
    ROA_t = \sum_{a\in\mathcal{A}_{t-1}}ROA_t^a + F_{t-1}C_{t-1}
\end{equation} \label{ref: ROA_t}
since we have fixed yearly time steps and $F_{t-1}$ is the corresponding forward rate. 
Combining \eqref{e:ROA of a single asset} with \eqref{e:MVa} yields
the implication for the portfolio that
\begin{align}
    \label{e:ROA}
    ROA_t 
    &= 
    E[ROA_t|\mathcal{F}_{t-1}] + ROA_t - E[ROA_t|\mathcal{F}_{t-1}] \\
    \notag 
    &= 
    \sum_{a\in\mathcal{A}_{t-1}}\Big(
     E[cf_t^a|\mathcal{F}_{t-1}] 
     + E[\Delta BV_t^a|\mathcal{F}_{t-1}]
    \Big)
    + F_{t-1}C_{t-1}
    +  ROA_t - E[ROA_t|\mathcal{F}_{t-1}] \\
    \notag
    &=     
    \sum_{a\in\mathcal{A}_{t-1}}\Big(
     (1+F_{t-1})MV_{t-1}^a 
     - E[MV_t^a|\mathcal{F}_{t-1}]
     + E[\Delta BV_t^a|\mathcal{F}_{t-1}]
    \Big)
    + F_{t-1}C_{t-1}
    +  ROA_t - E[ROA_t|\mathcal{F}_{t-1}] 
    \\
    &= F_{t-1}BV_{t-1} 
      + F_{t-1}UG_{t-1} -  E[\revise{\Delta}UG_t|\mathcal{F}_{t-1}] 
      + ROA_t - E[ROA_t|\mathcal{F}_{t-1}]
\notag 
\end{align}
where we have used that $\sum_{a\in\mathcal{A}_{t-1}} E[UG_t^a|\mathcal{F}_{t-1}] = E[UG_t|\mathcal{F}_{t-1}]$. The significance of this equation is that it splits $ROA_t$ into three terms: the forward yield on the total book value, $F_{t-1}BV_{t-1}$; a term depending on the $\mathcal{F}_{t-1}$-prediction of return due to realizations of unrealized gains, $F_{t-1}UG_{t-1} -  E[\revise{\Delta} UG_t|\mathcal{F}_{t-1}]$; finally, the difference between return and $\mathcal{F}_{t-1}$-predicted return.  This interpretation is illustrated in the following toy example. 

\begin{remark}[Example]\label{rem:roaEx}
Fix an arbitrary $1<t<T$ and assume the company's portfolio at time $t-1$ consists of two identical bonds, denoted by $a_1$ and $a_2$, with maturity $t+1$, fixed coupon payment $KN/2$ and notional $N/2$, each, with $K\ge0$ and $N>0$. Suppose further that the company employs  the strict lower of cost or market value such that $BV_s = \textup{min}(N, MV_s)$ for $s=t-1$ and $s=t$. The cash flows generated by $a_i$ are coupon and notional payments or, if the management decides to sell $a_i$ at $s<t+1$, an additional payment of the prevailing market value at $s$ but no further (e.g., notional) payments after this time. We assume that the time $t-1$ predicted management rules are such that bonds are held to maturity, i.e., $E[cf_{t}^{a_i}|\mathcal{F}_{t-1}] = KN/2$ and $E[cf_{t+1}^{a_i}|\mathcal{F}_{t-1}] = (K+1)N/2$ for $i=1,2$.

The market value of the portfolio at $t$ is then $MV_t = (1+F_t)^{-1}(1+K)N$. At $t-1$, assume that   $K\ge F_{t-1}$ and  $E[(1+F_t)^{-1}|\mathcal{F}_{t-1}](1+K)\ge1$
such that 
\[
MV_{t-1} 
= (1+F_{t-1})^{-1}E[(1+F_t)^{-1}|\mathcal{F}_{t-1}](1+K)N + (1+F_{t-1})KN 
\ge N
\] 
whence $BV_{t-1}=\textup{min}(N,MV_{t-1}) = N$. The formula for $MV_{t-1}$ is, in fact, independent of management decisions at $t$ which reflects the fact that future investment strategies cannot affect the currently given market consistent value of a portfolio. Now we describe the constituents of equation~\eqref{e:ROA} under two management decisions at $t$: decision (1) is defined as the management's default strategy of not taking any action at $t$, while decision (2) shall mean that management decides to sell one of the two bonds at $t$, namely $a_1$. 
For example, this decision might depend on the observation of $F_t$. 

In both cases we clearly have $F_{t-1}BV_{t-1} = F_{t-1}N$, independently of management actions or market movements in $F_t$. 

Decision (1): 
using \eqref{e:MVa},  the second term in \eqref{e:ROA} is now calculated as
\revise{
\begin{align*}
F_{t-1}UG_{t-1} - \Delta E[UG_t|\mathcal{F}_{t-1}] 
&= F_{t-1}(MV_{t-1} - N) - E[MV_{t}|\mathcal{F}_{t-1}] + MV_{t-1} 
+ E[\textup{min}(MV_t,N)|\mathcal{F}_{t-1}] - N  \\
&= (K-F_{t-1})N
 - 
 E[\textup{max}((F_t - K)/(1+F_t),0) | \mathcal{F}_{t-1}] N .
\end{align*}
}
The third term gives 
\revise{
\begin{align*}
 ROA_t - E[ROA_t|\mathcal{F}_{t-1}] 
 &= cf_t^{a_1} + cf_t^{a_2} + \textup{min}((1+F_t)^{-1}(1+K)N, N) - N
 - KN   \\ 
 &\phantom{===} 
 - E[\textup{min}((1+F_t)^{-1}(1+K)N, N)|\mathcal{F}_{t-1}] + N  \\ 
 &=   \textup{min}((1+F_t)^{-1}(1+K)N, N)  
  - E[\textup{min}((1+F_t)^{-1}(1+K)N, N)|\mathcal{F}_{t-1}]
\end{align*}
}
since $cf_t^{a_1} = KN/2$ with respect to strategy (1) and $cf_t^{a_2} = KN/2$ in all cases. 

Decision (2): the second term remains unchanged since it corresponds to an $\mathcal{F}_{t-1}$ prediction. This term therefore captures the contribution of unrealized gains to book value return, $ROA_t$, that is due to the expected evolution of the portfolio. 
On the other hand, the third term becomes now, due to $cf_t^{a_1} = KN/2 + MV_t^{a_1}$ (coupon payments reward for having held the asset over the period $[t-1,t]$ and thus precede market placements) and $BV_t^{a_1} = 0$ (termination of asset), to 
\revise{
\begin{align*}
 ROA_t - E[ROA_t|\mathcal{F}_{t-1}] 
 &= 
 MV_t^{a_1} + \textup{min}((1+F_t)^{-1}(1+K)N, N)/2  
  - E[\textup{min}((1+F_t)^{-1}(1+K)N, N)|\mathcal{F}_{t-1}] \\
 &=  (1+F_t)^{-1}(1+K)N / 2 + \textup{min}((1+F_t)^{-1}(1+K)N, N)/2 \\
 &\phantom{===}
  - E[\textup{min}((1+F_t)^{-1}(1+K)N, N)|\mathcal{F}_{t-1}] .
\end{align*}
}
Notice that $MV_t^{a_1} =
(BV_t^{a_1})' + (UG_t^{a_1})'$ with
$(BV_t^{a_1})' 
= \textup{min}((1+F_t)^{-1}(1+K)N, N)/2$ where $(\cdot)'$  denotes \revise{the} asset's value before selling.
Hence this corresponds to a realization of unrealized gains, 
$(UG_t^{a_1})' = \textup{max}( (K-F_t)/(1+F_t) , 0 )N/2$, 
as a cash flow.
If $K\le F_t$ it follows that $(UG_t^{a_1})' = 0$,  and in this case the decision to sell does not have any effect on the return because of the lower of market value or cost principle.
The third term thus represents the unpredicted return (due to market movements or management actions at $t$).

\revise{
Thus this remark illustrates how market fluctuations  and  management decisions at $t$ influence the third term, $ROA_t - E[ROA_t|\mathcal{F}_{t-1}]$, in \eqref{e:ROA} while leaving the other two terms unaffected.  
The} above formulae can be simplified by  considering the special cases $F_t = 0$ \revise{or} $K=0$ with $F_t>0$. 
\end{remark}

\subsection{Contracts and model points}
We now describe the contracts which exist in the company's liability portfolio at valuation time. All contracts are assumed to have a minimum guaranteed benefit and bonus benefit which depends on the company's profit and is shared between policyholder, shareholder and tax office. Each contract gives rise to either a single maturity or mortality benefit, or multiple annuity benefits, or a surrender benefit, cost payments associated with the contract and premium payments \revise{accepted} by the company. We refer to \cite{Gerber} for a detailed exposition of life insurance mathematics.

Each contract, $c$, has at each time step, $t$, a mathematical reserve, $\widetilde{V}_t^c$, defined by actuarial principles and first order assumptions. Further, at valuation time the contract may have already received bonus declarations. The sum of these declared bonuses up to and including time $t=0$ are denoted by $(\widetilde{DB}_0^{\le0})^c$. If the contract is still active at $t>0$ this account remains unchanged, $(\widetilde{DB}_t^{\le0})^c=(\widetilde{DB}_0^{\le0})^c$. The sum of bonus declarations \revise{after valuation time} up to and including time $t>0$ is denoted by $\widetilde{DB}_t^c$, and we have $\widetilde{DB}_0^c = 0$.
The contracts total reserve is thus $\widetilde{LP}_t^c = \widetilde{V}_t^c + (\widetilde{DB}_t^{\le0})^c + \widetilde{DB}_t^c$.

If $m>0$ is the contract's maturity, the policyholder receives a guaranteed minimum benefit, $gbf_m^c$, plus declared bonuses, i.e., the benefit cash flow at $m$ equals 
$gbf_m^c + (\widetilde{DB}_{m-1}^{\le0})^c + \widetilde{DB}_{m-1}^c$. Profit that is generated at $m$ is not shared with contracts maturing at the same time. Annuity payments are analogous with a corresponding fraction of bonuses being paid out. If the policyholder dies at $0<t<m$ the death benefit similarly consists of a sum of a guaranteed minimum benefit, $gbf_t^c$, and declared bonuses, $(\widetilde{DB}_{t-1}^{\le0})^c + \widetilde{DB}_{t-1}^c$. 
Notice that, contrary to $\widetilde{DB}_{t-1}^c$, the cash flow due to $(\widetilde{DB}_{t-1}^{\le0})^c$ is already guaranteed at valuation time since these bonuses have already been declared. 
If the policyholder decides to surrender the contract at $0<t<m$ the resulting surrender benefit is $(1 - \chi_t^c)\widetilde{LP}_{t-1}^c$ where $0\le\chi_t^c\le1$ is the fraction which gives rise to the surrender gain made by the company.

For modelling purposes it is advantageous to describe model points instead of contracts. A model point is defined as either a single contract, or a collection of identical contracts, such that survival probabilities are already taken into account. Thus if $x$ is the model point associated to a contract $c$ and $p_0^t$ is the survival probability (incorporating mortality and surrender) between $0$ and $t$ the reserves  are related by $LP_t^x = p_0^t\widetilde{LP}_t^c$. However, $p_0^t$ may depend on the underlying economic scenario via dynamic policyholder behavior and therefore we refrain from introducing these probabilities explicitly. Rather, we let $V_t^x$\label{ref: V_t^x}, $(DB_t^{\le0})^x$\label{ref: (DB_t^le0)^x}, and $DB_t^x$\label{ref: DB_t^x} denote the  mathematical reserve, declared bonuses up to and including valuation time, and declared bonuses after valuation time, respectively, such that survival probabilities are taken into account. If policyholder behavior is dynamic with respect to economic scenarios then $V_t^x$ and $(DB_t^{\le0})^x$ are also scenario dependent. Future declared bonuses, $DB_t^x$, depend on economic scenarios by construction.  
Notice that $(DB_t^{\le0})^x\le (DB_0^{\le0})^x$, in general. Analogously, the probability weighted minimum guaranteed (maturity and mortality) benefits generated by model point $x$ at $t$ are denoted by $gbf_t^x$\label{ref: gbf_t^x}. The probability weighted cash flows, at $t$, due to $(DB_{t-1}^{\le0})^x$ are called $(gbf_t^{\le0})^x$\label{ref: (gbf_t^le0)^x}. Those due to $DB_{t-1}^x$ are called $ph_t^x$\label{ref: ph_t^x}. 

The total benefit cash flows are 
\[
 gbf_t = \sum_{x\in\mathcal{X}_t}gbf_t^x,
 \qquad 
 gbf_t^{\le0} = \sum_{x\in\mathcal{X}_t}(gbf_t^{\le0})^x,
 \qquad 
 ph_t = \sum_{x\in\mathcal{X}_t}ph_t^x 
\] \label{ref: gbf_t} \label{ref: gbf_t^le0} \label{ref: ph_t}
and the total reserves are given correspondingly by 
\[
 V_t = \sum_{x\in\mathcal{X}_t}V_t^x,
 \qquad 
 DB_t^{\le0} = \sum_{x\in\mathcal{X}_t}(DB_t^{\le0})^x,
 \qquad 
 DB_t = \sum_{x\in\mathcal{X}_t}DB_t^x 
\] \label{ref: V_t} \label{ref: DB_t^le0} \label{ref: DB_t}
where $\mathcal{X}_t$\label{ref: X_t} denotes the set of model points active at time $t$.

\subsection{Gross surplus and profit sharing}\label{sec: gs}
Let $L_t$ \label{ref: L_t} denote the book value of liabilities at time $t$, and we assume that $L_t = 
LP_t + SF_t$ is a sum of two items:
firstly, the life assurance provision, $LP_t = V_t + DB_t^{\le0} + DB_t$\label{ref: LP_t}; and secondly,  the surplus fund, $SF_t$\label{ref: SF_t}.
The surplus fund at time $t$, $SF_{t}$, consists of those profits that have not yet been declared to policyholders. As opposed to $LP_t$, $SF_t$ belongs to the collective of policyholders and cannot be attributed to individual contracts. 
This set-up follows the same logic as \cite{Gerstner08} where $V_t$, $DB_t^{\le0}+DB_t$, and $SF_t$ are referred to as the actuarial reserve,  allocated bonus, and free reserve (buffer account), respectively.  

The difference between total book value of assets and liabilities \revise{is} the free capital, 
$
    BV_t = FC_t + L_t 
$.\label{ref: FC_t}
Since \revise{return on free capital is not shared with  policyholders and therefore does} not contribute to the future discretionary benefits that we are interested in, we assume without loss of generality that $FC_t = 0$, so that the initial book value of assets is equal to the initial value of liabilities. Further, we assume that all shareholder gains that are produced by the company over the projection time are directly paid out to shareholder, and not accumulated within the company so that $FC_t=0$  (cf.~\cite[A.~2.2]{HG19}):
\begin{equation} \label{e:BV = LP + SF}
    BV_t 
    = L_t 
    = LP_t+SF_t
\end{equation}
Indeed, for purposes of best estimate calculation the free capital is not relevant as this, and the corresponding revenue, is not shared with policyholders. Moreover, assets used to cover statutory reserves may not be attributed separately to $L_t$ and $FC_t$ (\cite{MindZV,GBVVU}). Hence setting $FC_t=0$  leads to the appropriate scaling of revenue that is to be shared with policyholders. Thus the equality \eqref{e:BV = LP + SF} can always be achieved by replacing $BV_0$, $MV_0$, $UG_0$ by $BV_0' = BV_0-FC_0$, $MV_0' = \frac{BV_0'}{BV_0}MV_0$, $UG_0' = \frac{BV_0'}{BV_0}UG_0$, respectively, and assuming that future shareholder gains are paid out as cash flows which leave the model. 
On the other hand, if the goal is to build a comprehensive asset liability model to simulate the shareholder's point of view then the free capital is of paramount importance. However, in the simulation of a run-off liability book as under Solvency~II such a point of view is difficult to realize since the relation between $FC_t$ and $L_t$ quickly becomes unrealistic (without introducing new business). 
Notice also that the equity position in the balance sheet model of \cite{Gerstner08} is a hybrid of free capital, in the above sense of $FC_t$, and hidden reserves, $UG_t$. We do retain $UG_t$ in the projection since this is indispensable for best estimate calculation. 

The profit sharing mechanism depends on the company's gross surplus with respect to local accounting rules. The gross surplus can be described verbally as the sum of book value return, increase or decrease of statutory reserves, and all relevant cash flows (premiums, benefits, costs).     
The gross surplus, $gs_t$\label{ref: gs_t}, at $t$ is therefore defined as  
\begin{align*}
    gs_t 
    :=
    ROA_t
    - \Delta V_t - \Delta DB_t^{\le0} - DB_t^- + DB_{t-1}
    + pr_t - gbf_t - gbf_t^{\le0} - ph_t - co_t
\end{align*}
where $ROA_t$ is the book value return \eqref{e:ROA}, $DB_t^-$\label{ref: DB_t^-} is the account of declared bonuses before bonus declaration at $t$, $pr_t$\label{ref: pr_t} are premium payments and $co_t$\label{ref: co_t} are all cost cash flows. If $\chi_t$\label{ref: chi_t} is the appropriately averaged surrender fee factor at $t$, the discrepancy between $- DB_t^- + DB_{t-1}$ and $ph_t$ can be expressed as $- DB_t^- + DB_{t-1} - ph_t = \chi_t DB_{t-1}$. The analogous expression holds with an appropriately chosen factor $\chi_t^{\le0}$, representing surrender fees:   \revise{$-\Delta DB_t^{\le0} = \chi_t^{\le0} DB_{t-1}^{\le0}$}. Let further $\rho_t$\label{ref: rho_t} denote the averaged technical interest rate at $t-1$ such that  $-\Delta V_t + pr_t - gbf_t - co_t = -\rho_t V_{t-1} \revise{-} tg_t$ where $tg_t$\label{ref: tg_t} are the technical gains due to mortality and cost margins and surrender fees stemming from the mathematical reserves. We define $\gamma_t := (tg_t + \chi_t^{\le0} DB_{t-1}^{\le0} + \chi_t DB_{t-1})/LP_{t-1}$\label{ref: gamma_t} and express the gross surplus as 
\begin{equation}
\label{e:gs_def}
     gs_t 
 = ROA_t - \rho_t V_{t-1} + \gamma_t LP_{t-1}.
\end{equation} 

\begin{remark}
The advantage of expressing the gross surplus in this form is that the effects of book value return on assets, guaranteed technical interest rate and technical gains (i.e., mortality, cost and surrender margin) can be isolated.
\end{remark}

If $gs_t$ is positive, it is the surplus shared between policyholder, shareholder and tax office. If it is negative it is covered by the shareholder. 
Indeed, profit sharing is defined by legislation (\cite{MindZV,GBVVU, Dorobantu_etal20}) and requires that $gs_t$ is  shared between shareholders, policyholders and tax office according to
\begin{equation}
\label{e:ph*}
    gs_t
    = sh_t + ph_t^* + tax_t
\end{equation} \label{ref: sh_t} \label{ref: ph_t^*} \label{ref: tax_t}
where $sh_t = gsh\cdot gs_t^+ - gs_t^-$, $ph_t^* = gph\cdot gs_t^+$ and $tax_t = gtax\cdot gs_t^+$, and where $gsh$, $gph$ and $gtax$ are positive numbers such that $gsh + gph + gtax = 1$.   Furthermore, $c^+$ and $c^-$ denote the positive and negative parts of a number $c$, respectively. 
 
Notice that $sh_t$ and $tax_t$ constitute cash flows since this is money that leaves the company (and the model), while $ph_t^*$ is an accounting flow since this corresponds to a quantity that is transferred within the company to a different account but is not paid out as a benefit at time $t$. 

A fundamental principle of traditional life insurance is that profit sharing is \emph{not} equal to profit declaration (\cite{MindZV,GBVVU}). 
This means that $ph_t^* = gph\cdot gs_t^+$ is not necessarily declared (or credited) to its full extent to specific policyholder accounts.  
Rather, the profit sharing mechanism in traditional life insurance is such that a part, $\nu_t\cdot ph_t^*$ with $0\le\nu_t\le1$, of $ph_t^*$ is declared to  the policyholder accounts. Management may choose the value $\nu_t$ at each accounting step $t$. However, declaration may also be augmented by additional contributions, $\eta_t\cdot SF_{t-1}$ with $0\le\nu_t\le1$, from the previously existing surplus fund $SF_{t-1}$. again, management may choose the value $\eta_t$ at each accounting step $t$. Typically, surplus fund contributions will take place when $ph_t^*$ is small compared to management goals.

The total bonus declaration to $DB_t$ at time $t$ is therefore of the form
\begin{equation}
\label{e:decl}
    \nu_t\cdot ph_t^* + \eta_t\cdot SF_{t-1}
\end{equation} \label{ref: nu_t}
with the factor $\nu_t$ and $\eta_t$ determined according to management rules. 

As above, let $ph_t$ denote the amount of discretionary benefits paid out at time $t$. This cash flow depends on declarations to the declared benefits account, $DB_k$, which have occurred at times $0<k<t$. Declarations at valuation time, $t=0$, belong by definition to $DB_t^{\le0}$, and the resulting cash flows are already guaranteed at $t=0$, whence these do not contribute to $ph_t$. Therefore, we have 
\begin{equation}
    ph_1=0. 
\end{equation}
For $k>0$ let $0\le\eta_k\le1$\label{ref: eta_k} denote the amount of declaration from $SF_{k-1}$ to $DB_k$. The numbers $\eta_k$ and $\nu_k$ are, in general, unknown at $t=0$ and depend on management rules. Let further $0\le \mu_k^t\le1$\label{ref: mu_k^t} denote the fraction of bonus declarations at $k$, $\eta_k\cdot SF_{k-1} + \nu_k\cdot ph_k^*$, that is either paid out as a future discretionary benefit (in case of contract maturity or mortality) or kept by the company as a surrender fee (in case of premature contract termination), $sg_t^*$\label{ref: sg_t^*}, at time $t$. This fraction is also unknown and may depend, in general, on dynamic policyholder behavior. 
The defining relation is thus
\begin{equation}
    \label{e:ph_t-def}
    ph_t + sg_t^*
    = \sum_{k=1}^{t-1} \mu_k^t\Big(\eta_k\cdot SF_{k-1} + \nu_k\cdot ph_k^*\Big)
\end{equation}
where $t\ge2$.
Since the sum of discretionary benefits and surrender fees cannot exceed the amount of previous declarations we must have 
\begin{equation}
    \label{e:mu-le1}
    \sum_{t=k+1}^T\mu_k^t \le 1 . 
\end{equation}
\revise{To} sum up the above discussion, passing from $t-1$ to $t$, $DB_{t-1}$ is:  
\begin{itemize}
    \item 
    increased by declarations $\eta_t\cdot SF_{t-1}$ from the surplus fund
    where $0\le\eta_t\le1$ is chosen by the management,
    \item
    increased by direct policyholder declarations $\nu_t\cdot ph_t^*$
    where $0\le\nu_t\le1$ is chosen by the management,
    \item
    decreased by cash flows, $ph_t$, to policyholders whose contracts terminate at $t$, and
    \item
    decreased by accounting flows $sg_t^* := \chi_t\cdot DB_{t-1}$ with $0\le\chi_t\le1$ due to surrender fees. The fraction $\chi_t\cdot DB_{t-1}$ is freed up, in the sense that it is not attributed to specific contracts anymore, and thus contributes to the annual gross surplus.  
\end{itemize}
Therefore, we have the iterative relation
\begin{equation}
\label{e:DB0}
    DB_t
    = DB_{t-1}
    + \eta_t\cdot SF_{t-1}
    + \nu_t\cdot ph_t^*
    - ph_t
    - sg_t^*
\end{equation}
with starting point $DB_0 = 0$.

Consider the surplus fund $SF_{t-1}$ at $t-1$. Going one time step further, it is increased by allocating 
$
 (1-\nu_t)\cdot ph_t^*
$
to the fund, which is the part of $ph^*_t$ not declared to the policyholders' accounts, and decreased by declaring 
$
 \eta_t\cdot SF_{t-1}
$
to policyholder accounts. We thus obtain 
\begin{equation}
    \label{e:SF}
    SF_t
    = SF_{t-1} 
        + (1-\nu_t)\cdot ph_t^* 
        - \eta_t\cdot SF_{t-1}
\end{equation}
with known starting point $SF_0$. Together with \eqref{e:DB0} this yields
\begin{equation}
\label{e:DBSFevol}
    \Delta\,(DB_t+SF_t)
    =
    DB_t + SF_t - DB_{t-1} - SF_{t-1} 
    = 
    + ph_t^*
    - ph_t
    - sg_t^*,
\end{equation}
Crucially, the model dependent fractions $\nu_t$ and $\eta_t$ do not appear in this equation. 
This evolution equation for $DB_t + SF_t$ is the starting point for the subsequent analysis. The sum $DB_t+SF_t$ is increased at each time step by the total shared profit, $ph_t^*$, as opposed to the declaration~\eqref{e:decl}, and therefore represents the statutory reserves of previously shared profit, although the totality of this sum is not a (single) balance sheet item.

For further reference we observe that 
\begin{align}\label{e:ph1}
    PH^*
    &:= E\Big[\sum B_t^{-1} ph_t^*\Big] 
    = gph\cdot E\Big[\sum B_t^{-1} gs_t^+\Big]\\
    \notag
    &= gph\cdot  E\Big[\sum B_t^{-1} gs_t\Big] 
     + gph\cdot  E\Big[\sum B_t^{-1} gs_t^-\Big]\\
    &= gph\cdot  (VIF + PH^* + TAX ) + gph\cdot COG.
    \notag
\end{align} \label{ref: PH^*}
Here we have used the splitting $gs_t = sh_t + ph_t^* + tax_t$, where $sh_t$ and $tax_t$ are the shareholder and tax cash flows as define\revise{d} above, to obtain the value of in-force business
\[
    VIF 
    = E\Big[\sum_{t=1}^T B_t^{-1} sh_t\Big]
    = E\Big[\sum_{t=1}^T B_t^{-1} (gsh\cdot gs_t^+ - gs_t^-) \Big],
\] \label{ref: VIF}
the cost of guarantees
\begin{equation} 
\label{e:COG}
    COG 
    = E\Big[\sum_{t=1}^T B_t^{-1} gs_t^-\Big],
\end{equation} \label{ref: COG}
and
\[
TAX 
    = E\Big[\sum_{t=1}^T B_t^{-1} tax_t \Big]
    =  E\Big[\sum_{t=1}^T B_t^{-1} gtax\cdot gs_t^+ \Big].
\] \label{ref: TAX}

\subsection{Future discretionary benefits}
According to Solvency~II \cite{L1,L2}, the best estimate is the expectation of all future cash flows which are related to existing business. 
These cash flows are benefits, $gbf_t+gbf_t^{\le0}+ph_t$, premium income, $pr_t$, and costs, $co_t$. The best estimate is thus defined as
\begin{equation}
BE 
 := 
 E\left[\sum_{t=1}^TB_t^{-1}(gbf_t+gbf_t^{\le0}+ph_t + co_t - pr_t)\right]
\end{equation} \label{ref: BE}
The value of future discretionary benefits is given by
\begin{equation}
\label{e:fdb-def}
FDB := E\left[\sum_{t=1}^T B_t^{-1}ph_t\right]
\end{equation} \label{ref: FDB}
and the value of the guaranteed benefits is by definition  
\begin{equation}
\label{ref: GB}
    GB := BE - FDB 
    = 
    E\Big[\sum_{t=1}^TB_t^{-1}(gbf_t + gbf_t^{\le0} + co_t - pr_t)\Big] 
\end{equation} 
If actuarial variables are independent from economical ones, then $gbf_t + gbf_t^{\le0}$, $co_t$ and $pr_t$, and hence $GB$, may be calculated from a purely deterministic model.
This independence would exclude the possibility of dynamic policyholder behavior (e.\,g., surrender depends dynamically on a comparison of declared bonuses and the prevailing yield curve).
However, for our results to hold, we do not need to make this assumption. 

\begin{lemma}
\begin{equation}\label{e:PH*}
    PH^*
    = 
    \frac{gph}{1-gph}\Big(MV_0 - E[B_T^{-1}MV_T] - GB - FDB + COG \Big)
\end{equation}
\end{lemma}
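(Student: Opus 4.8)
The plan is to combine the linear relation for $PH^*$ that is already available from \eqref{e:ph1} with a value-conservation identity for the market value of the asset portfolio. From the last line of \eqref{e:ph1} one has $PH^* = gph\,(VIF + PH^* + TAX) + gph\cdot COG$; solving this linear equation for $PH^*$ gives $PH^* = \frac{gph}{1-gph}\,(VIF + TAX + COG)$. Hence it suffices to establish the balance-sheet identity
\[
 MV_0 = E[B_T^{-1}MV_T] + VIF + TAX + GB + FDB, \qquad (\star)
\]
because substituting $VIF + TAX = MV_0 - E[B_T^{-1}MV_T] - GB - FDB$ into the previous display reproduces \eqref{e:PH*} verbatim.

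To prove $(\star)$ I would track the discounted total market value $B_t^{-1}MV_t$. Put $\mathrm{out}_t := sh_t + tax_t + gbf_t + gbf_t^{\le0} + ph_t + co_t - pr_t$, the net cash that actually leaves the company at time $t$; note that the internal accounting flows $ph_t^*$ and $sg_t^*$ do not appear here, since they are merely transferred between $DB_t$, $SF_t$ and the gross surplus. Because $FC_t = 0$ by \eqref{e:BV = LP + SF}, and by the no-leakage principle of \cite{HG19}, the whole asset portfolio backs the liabilities and the only money leaving the model is $\mathrm{out}_t$. Consequently, writing $\widetilde{MV}_t$ for the market value at time $t$ of the assets inherited from time $t-1$ together with the cash flows $cf_t^a$ they generate and the interest $F_{t-1}C_{t-1}$ earned on the cash account, i.e.\ before paying $\mathrm{out}_t$ and before any reinvestment, we have $MV_t = \widetilde{MV}_t - \mathrm{out}_t$, since reinvestment at market prices preserves total market value.

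Summing the first line of \eqref{e:MVa} over $a\in\mathcal{A}_{t-1}$ and adding $(1+F_{t-1})C_{t-1}$ gives $E[\widetilde{MV}_t\mid\mathcal{F}_{t-1}] = (1+F_{t-1})MV_{t-1}$, that is $E[B_t^{-1}\widetilde{MV}_t\mid\mathcal{F}_{t-1}] = B_{t-1}^{-1}MV_{t-1}$. Therefore the process $X_t := B_t^{-1}MV_t + \sum_{s=1}^{t}B_s^{-1}\mathrm{out}_s$ satisfies $X_t - X_{t-1} = B_t^{-1}\widetilde{MV}_t - B_{t-1}^{-1}MV_{t-1}$, hence $E[X_t - X_{t-1}\mid\mathcal{F}_{t-1}] = 0$, so $X$ is a $\mathbb{Q}$-martingale. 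Since $B_0 = 1$ and there are no outflows before $t=1$, we get $X_0 = MV_0$, and taking expectations at $t=T$ yields $MV_0 = E[B_T^{-1}MV_T] + E[\sum_{t=1}^{T}B_t^{-1}\mathrm{out}_t]$. Splitting the remaining sum into the shareholder, tax, guaranteed-benefit/cost/premium and discretionary-benefit groups and identifying them, respectively, with $VIF$, $TAX$, $GB$ (cf.\ \eqref{ref: GB}) and $FDB$ (cf.\ \eqref{e:fdb-def}), using the definitions of $VIF$ and $TAX$ from Section~\ref{sec: gs}, gives $(\star)$, and the lemma follows.

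The main obstacle is the identity $MV_t = \widetilde{MV}_t - \mathrm{out}_t$: one has to make rigorous that the asset side at time $t$ is exactly the carried-forward portfolio minus the defined outflows, which is precisely where the normalization $FC_t = 0$ and the no-leakage principle are used, and where one must check that the internal reshuffling between $V_t$, $DB_t^{\le0}$, $DB_t$ and $SF_t$ (governed by \eqref{e:DB0}, \eqref{e:SF} and their consequence \eqref{e:DBSFevol}) together with reinvestment neither creates nor destroys market value. Once this is granted, the martingale computation from \eqref{e:MVa} and the final algebraic substitution into the $PH^*$ relation are routine.
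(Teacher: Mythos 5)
Your proof is correct and follows essentially the same route as the paper: both solve the linear relation from \eqref{e:ph1} for $PH^*$ and substitute the no-leakage identity $MV_0 = BE + VIF + TAX + E[B_T^{-1}MV_T]$ together with the decomposition $BE = GB + FDB$. The only difference is that the paper invokes the no-leakage principle as a citable input, namely \cite[Prop.~2.2]{HG19} (cf.\ Remark~\ref{rem:no-leak}), whereas you re-derive it via the discounted-portfolio martingale argument; that derivation is sound but not required for the lemma as stated.
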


\begin{proof}
We use the no-leakage principle \cite[Prop.~2.2]{HG19} which states that $MV_0 = BE + VIF + TAX + E[B_T^{-1}MV_T]$. With the decomposition \revise{$BE=GB+FDB$} and definition~\eqref{e:ph1} this implies that 
$PH^* = gph(PH^* + MV_0 - GB - FDB - E[B_T^{-1}MV_T] + COG)$. 
\end{proof}

\begin{remark}\label{rem:no-leak}
The no-leakage principle \cite[Prop.~2.2]{HG19} essentially states that in a risk neutral model all cash flows have to be accounted for and all future expected gains or losses have to be reflected in the initial market value. This is a general statement and uses only no arbitrage theory and the generally accepted accounting principles which define the cash flows leading to the quantities $BE$, $VIF$ and $TAX$. However, the precise formulation of the accounting principles is not relevant in this context since it suffices that cash flows are well defined and that there can be no other cash flows except those to the policyholder (including costs), to the shareholder and to the tax office.  
\end{remark}

\section{A representation of $FDB$}\label{sec:fdb_rep}
Equation~\eqref{e:DBSFevol} may be rephrased as
$
    B_t^{-1}\Delta (DB_t + SF_t) 
    = B_t^{-1}ph_t^*  - B_t^{-1}ph_t - B_t^{-1}sg_t^*
$.
By virtue of $\Delta(f_t g_t) = (\Delta f_t) g_{t-1} + f_t\Delta g_t$,
we obtain a discrete integration by parts formula 
\begin{align}
\label{e:ibp}
   \sum_{t=1}^T \Big( B_t^{-1}ph_t^*  - B_t^{-1}ph_t - B_t^{-1} sg_t^* \Big) 
   &= 
   \sum_{t=1}^T B_t^{-1}\Delta (DB_t + SF_t)  \\ 
   \notag 
   &= 
   \sum_{t=1}^T \Delta ( B_t^{-1} (DB_t + SF_t) )
   - \sum_{t=1}^T  (DB_{t-1} + SF_{t-1}) \Delta B_t^{-1} \\ 
   \notag 
   &= 
   B_T^{-1}(DB_T + SF_T) - SF_0
   + \sum_{t=1}^T  (DB_{t-1} + SF_{t-1}) F_{t-1} B_t^{-1}
\end{align}
since $\Delta B_t^{-1} = -F_{t-1}B_t^{-1}$. 
Taking the expectation of this equation and using \eqref{e:PH*}, we find 
\begin{align*}
    &\frac{gph}{1-gph}\Big(MV_0 - E[B_T^{-1}MV_T]-GB+COG\Big) 
    - \frac{1}{1-gph}FDB \\
    &= PH^* - FDB \\
    &= 
    E\Big[ B_T^{-1}(DB_T + SF_T) \Big] - SF_0
    + E\Big[ \sum_{t=1}^T  (DB_{t-1} + SF_{t-1}) F_{t-1} B_t^{-1} \Big] 
    + E\Big[ \sum_{t=1}^T B_t^{-1} sg_t^* \Big] 
\end{align*}
Because of $MV_0 = BV_0 + UG_0 = LP_0 + SF_0 + UG_0$, rearranging yields:
\begin{theorem}
\begin{equation}\label{e:fdb-rep}
    FDB 
    = 
    SF_0 + gph\Big(LP_0 + UG_0 - GB \Big) 
    + gph\cdot COG - I - II - III 
\end{equation}
where
\begin{align*}
    I
    &:= E\Big[B_T^{-1}\Big(DB_T+SF_T + gph(UG_T + V_T + DB_T^{\le0}) \Big)\Big] \\
    II
    &:= 
    (1-gph)E\Big[ \sum_{t=2}^T B_t^{-1} sg_t^*\Big] \\
    III
    &:=
    (1-gph) E\Big[ \sum_{t=1}^T F_{t-1} B_t^{-1}  (DB_{t-1} + SF_{t-1}) \Big] 
\end{align*}
\end{theorem}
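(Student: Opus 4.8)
The plan is to promote the one-step recursion \eqref{e:DBSFevol} for $DB_t+SF_t$ to a global identity via a discrete summation by parts, take expectations, and then eliminate $PH^*$ using the Lemma \eqref{e:PH*}; everything after that is algebraic rearrangement into the claimed shape. For the summation by parts: divide \eqref{e:DBSFevol} by the numeraire $B_t$, sum over $t=1,\dots,T$, and apply the product rule $\Delta(f_tg_t)=(\Delta f_t)g_{t-1}+f_t\Delta g_t$ with $f_t=B_t^{-1}$ and $g_t=DB_t+SF_t$, together with $\Delta B_t^{-1}=-F_{t-1}B_t^{-1}$. The telescoping sum of the exact-difference part keeps only the two boundary values, $B_T^{-1}(DB_T+SF_T)$ at the top and $-B_0^{-1}(DB_0+SF_0)=-SF_0$ at the bottom (recall $B_0=1$ is an empty product and $DB_0=0$). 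This is precisely \eqref{e:ibp}.

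Taking $\mathbb{Q}$-expectations in \eqref{e:ibp} and using the definitions \eqref{e:ph1} and \eqref{e:fdb-def}, the left side reads $PH^*-FDB-E\big[\sum_t B_t^{-1}sg_t^*\big]$, so that
\begin{align*}
PH^*-FDB &= E\big[B_T^{-1}(DB_T+SF_T)\big]-SF_0 \\
&\quad +E\Big[\sum_{t=1}^T F_{t-1}B_t^{-1}(DB_{t-1}+SF_{t-1})\Big]+E\Big[\sum_{t=1}^T B_t^{-1}sg_t^*\Big].
\end{align*}
Substituting \eqref{e:PH*} for $PH^*$ on the left, the $FDB$ coming from \eqref{e:PH*} (with coefficient $-\tfrac{gph}{1-gph}$) combines with the standalone $-FDB$ into $-\tfrac{1}{1-gph}FDB$, which is the displayed equation immediately preceding the theorem.

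Finally, solve that equation for $FDB$ by multiplying through by $-(1-gph)$; this writes $FDB$ as $gph\big(MV_0-E[B_T^{-1}MV_T]-GB+COG\big)$ minus $(1-gph)$ times the entire right-hand side above. Three rewrites then deliver the stated form: (i) $MV_0=BV_0+UG_0=LP_0+SF_0+UG_0$ by \eqref{e:BV = LP + SF}, so that $gph\,SF_0$ and $(1-gph)SF_0$ merge into the free term $SF_0$, leaving $gph(LP_0+UG_0-GB)+gph\cdot COG$; (ii) $MV_T=BV_T+UG_T=V_T+DB_T^{\le0}+DB_T+SF_T+UG_T$, again by \eqref{e:BV = LP + SF}, so that $gph\,E[B_T^{-1}MV_T]+(1-gph)E[B_T^{-1}(DB_T+SF_T)]$ collapses to $I$; (iii) $sg_1^*=\chi_1 DB_0=0$ because $DB_0=0$, so the $sg_t^*$-sum may begin at $t=2$, which is $II$, and the remaining summed term is $III$. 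Collecting everything gives \eqref{e:fdb-rep}.

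The proof is essentially bookkeeping, so the two points I expect to demand care are the sign and the boundary terms in the summation by parts of the first step (it is easy to lose the $-SF_0$ or the $F_{t-1}$), and, in the last step, keeping straight which of the three expectations lands inside $I$, $II$ and $III$ once the $1/(1-gph)$ denominator is cleared and $MV_0$, $MV_T$ are decomposed.
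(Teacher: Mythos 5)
Your proposal is correct and follows the paper's own argument essentially verbatim: the same summation by parts \eqref{e:ibp} applied to \eqref{e:DBSFevol}, taking expectations, eliminating $PH^*$ via \eqref{e:PH*}, and decomposing $MV_0$ and $MV_T$ through \eqref{e:BV = LP + SF} to collect the terms $I$, $II$, $III$. The only difference is that you spell out the final rearrangement (including the observation $sg_1^*=\chi_1 DB_0=0$ that lets the sum in $II$ start at $t=2$) which the paper compresses into ``rearranging yields.''
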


\begin{remark}
The estimation formula~\eqref{e:fdb-rep} is derived without any model specific assumptions and relies therefore only on general accounting rules and the application of the no-leakage principle in \eqref{e:PH*}.
The `integration by parts' \eqref{e:ibp} transfers the problem of calculating $E[\sum B_t^{-1}\Delta(DB_t+SF_t)]$ to one of evaluating or \emph{estimating} the `boundary term' $E[B_T^{-1}(DB_T+SF_T)]-SF_0$ as-well as $II$ and $III$. The idea is now that these approximations should be feasible since $SF_0$ is known, $E[B_T^{-1}(DB_T+SF_T)]$ is expected to be negligible at run-off time $T$, and estimation errors in $II$ and $III$ concern only the the  surrender gains from future declared bonuses and the return, due to $F_{t-1}$, on $DB_{t-1} + SF_{t-1}$, respectively.  
\end{remark}


\begin{remark}
The interpretation of the constituents of \eqref{e:fdb-rep} is as follows: 
\begin{itemize}
\item 
Term $SF_0$ is not multiplied by $gph$. This makes sense since the surplus fund, while not assigned to individual contracts, already belongs to the policyholder collective (compare \cite[Article91]{L1}). There cannot be a transfer of funds from $SF_t$ to the shareholder or tax office. On the other hand, the return on $SF_t$ is shared between all parties, whence the corresponding deduction in term~$III$. 
\item Term $gph\cdot (LP_0 - GB)$: 
According to the local GAA Principle of prudentiality (e.g., \cite[§148(1)]{VAG}), the life assurance provisions are determined with respect to safety margins and  $gph\cdot(LP_0 - GB)$ represents the policyholder share of this margin.
\item 
Term $gph\cdot UG_0$ represents the policyholder share in the unrealized gains existing in the portfolio at valuation time. 
\item 
When the safety margins considered in the calculation of $LP_0$ are not sufficient (e.g., due to a very low interest rate environment) then cost of guarantees arise and manifest as shareholder capital injections. When the environment is such that losses are expected for all future valuation dates $1\le t\le T$ then the company's management could choose to inject just enough shareholder capital to cover these losses so that $COG$ balances the right hand side of \eqref{e:fdb-rep} to yield $FDB=SF_0-I-II-III$. 
Balancing the right hand side of \eqref{e:fdb-rep} would mean to realize hidden reserves, $UG_t$, before injecting new capital, and while this might be a realistic assumption, equation~\eqref{e:fdb-rep}  holds independently of all such management rules. 
Moreover, in practice it is difficult to determine this minimal amount precisely such that the possibly counter-intuitive appearance of $COG$ in \eqref{e:fdb-rep} represents the policyholder share of excess capital injections in $gph(LP_0 + UG_0 - GB + COG)$.
%
\item Term~$I$ is related to the policyholder share of assets that remain
in the company after run-off of the liability portfolio;
\item
Term~$II$ is the tax and shareholder share (since $1-gph = gsh + gtax$) in the gross surplus due to the fraction of declared future profits, $DB_t$, that is freed up because of  surrender fees. 
\item 
Term~$III$ captures the tax and shareholder shares in interests on allocated profits as well as on the surplus fund.
\end{itemize}
\end{remark}

\section{Assumptions}\label{sec:assump}

\subsection{Liability run-off assumptions} 

\begin{ass}
    \label{ass:runoff}
    The projection horizon $T$\label{ref: T} corresponds to the run-off time of the liability portfolio such that 
    $SF_T = LP_T = UG_T = 0$
    (cf.~\cite[A.~3.13]{HG19}).
\end{ass}

\begin{ass}
    \label{ass:geom}
    The expected life assurance provisions $E[LP_t]$ decrease geometrically: there is a fixed $1\le h < T$\label{ref: h} such that $E[LP_t] = l_t^h\, LP_0$ where $l_t^h := 2^{-t/h}$ for $t<T$ and $l_T^h := 0$. 
\end{ass}

Since the portfolio is in run-off there is a time, $h$, where $E[LP_h] = LP_0/2$. Continuing from $h$ onwards there has to be a time, $h+h'$, such that $E[LP_{h+h'}] = E[LP_h]/2$. Assuming that the company's business model has been stable over time we have time  homogeneity in the sense that $h'=h$ and run-off of the liability book is geometric. This would not be satisfied if the company under consideration has taken up business only very recently but for companies with a longer history we view this as a very good approximation. 

\begin{ass}
\label{ass:sigma}
In expectation the total declared bonuses are a fixed fraction of the life assurance provisions: $E[DB_t^{\le0} + DB_t] = \sigma E[LP_t]$ for all $0\le t\le T$ and a fixed $0\le\sigma \le 1$. 
Moreover, $E[DB_t^{\le0}]$ does not vanish too quickly: $E[DB_t]\le \sigma_t E[LP_t]$ where $\sigma_t :=  t\sigma/h$ for $t\le h$ and $\sigma_t := \sigma$ for $t>h$.
\end{ass}

\begin{ass} \label{ass:boundSF} 
The relation $SF_0/LP_0 =: \vartheta$\label{ref: theta} remains constant in expectation: $E[SF_t] = \vartheta E[LP_t]$ for all $0\le t\le T$.  (Cf.~\cite[A.~3.10]{HG19})
\end{ass}

Assumptions~\ref{ass:sigma} and \ref{ass:boundSF} are also statements about time-homogeneity. Management rules concerning bonus declarations should remain reasonably constant in the long run such that $\sigma$ and $\vartheta$ do not vary too strongly. The relevant point in this context is that these quantities should not vary arbitrarily but follow from target rates set by management rules. Assumption~\ref{ass:boundSF} is comparable to the assumption concerning the `annual interest rate' in \cite[Section~4.2]{Gerstner08}.

\subsection{Surrender assumption}

\begin{ass}
    \label{ass:surr}
    The surrender gains, $sg_t^* = \chi_t DB_{t-1}$, can be estimated on average with the same factor, $\gamma_t$, as the technical gains in \eqref{e:gs_def}: 
    $E[sg_t^*] \le E[\gamma_t DB_{t-1}]$.  
\end{ass}
The factor $\gamma_t$ comprises mortality, cost and surrender margins as a fraction of the full life assurance provision, $LP_{t-1}$. It is therefore reasonable to expect that the same factor can be used as an upper bound on the surrender margin arising from declared bonuses, $DB_{t-1}$, alone.

\subsection{Bonus benefit assumptions}
Because of equation~\eqref{e:SF} the bonus benefit declaration at $t$ can be expressed as $\eta_t\cdot SF_{t-1} + \nu_t\cdot ph_t^*  = SF_{t-1} - SF_t + ph_t^*$. Management rules generally strive to keep profit declarations stable while, in accordance with assumptions~\ref{ass:boundSF} and \ref{ass:geom}, $SF_t$ is expected to decrease geometrically over time. In order for $SF_t$ to decrease in expectation, the bonus benefit declarations must be strictly positive in expectation. To achieve this, a fraction of the profit share, $ph_t^*$, must also be declared to policyholders, at least in expectation. We turn this reasoning into an assumption along all scenarios. 

\begin{ass}
\label{ass:nu}
There is a fixed $0<\nu<1$\label{ref: nu} such that the declarations satisfy
$
\eta_t\cdot SF_{t-1} + \nu_t\cdot ph_t^* 
\ge \nu\cdot ph_t^*
$
for all $1\le t\le T$.
\end{ass}

\begin{ass}
\label{ass:mu}
Assume that $\mu_k^{s+1}$ is determined by the geometric run-off assumption~\ref{ass:geom}: 
$
    \mu_k^{s+1} 
    = \frac{l_s^h - l_{s+1}^h}{l_k^h}
$.
\end{ass}
Notice that, for fixed $k$, this definition entails $\sum_{s=k}^{T-1} \mu_k^{s+1} = 1-l_T^h/l_k^h = 1$; cf.~\eqref{e:mu-le1}. That is, run-off is complete at $T$.

\subsection{Gross surplus assumptions}
According to \eqref{e:gs_def} and \eqref{e:ROA}, the gross surplus is given by 
\begin{align*}
    gs_t
    =
    F_{t-1} (LP_{t-1} + SF_{t-1}) 
    + F_{t-1}UG_{t-1} - \Delta UG_t|\mathcal{F}_{t-1} 
    + ROA_t - ROA_t|\mathcal{F}_{t-1}
    - (\rho_t-\gamma_t) V_{t-1}
\end{align*}
where $\rho_t$ and $\gamma_t$ may, in general, also depend on the stochastic interest rate curve via dynamic surrender. 
 
For the purpose of estimating terms $III$ and $COG$ in \eqref{e:fdb-rep}, we make the following simplifying assumptions. The principle idea behind these assumptions is that the main source of stochasticity in $gs_t$ is the forward rate $F_{t-1}$ whence all other quantities are replaced by their expected values.  The simplified model of $gs_t$ will be denoted by $\widehat{gs}_t$.

\begin{ass}
    \label{ass:tech}
    In $\widehat{gs}_t$ the technical interest rate $\rho_t$ and the technical gains $\gamma_t$ are deterministic functions of $t$. 
\end{ass}

\begin{ass}
\label{ass:ROA}
In $\widehat{gs}_t$ the return $ROA_t$ is predictable, i.e.\ $\mathcal{F}_{t-1}$-measurable, and realizations of unrealized gains are determined by a fixed number $1< d < T$\label{ref: d}:
\begin{enumerate}[\up (1)]
    \item 
    $ROA_t - \revise{E[ROA_t|\mathcal{F}_{t-1}]} = 0$;
    \item 
    $F_{t-1} UG_{t-1} - \Delta UG_t|\mathcal{F}_{t-1} = P(0,t)^{-1}(l_{t-1}^d - l_t^d)UG_0$
    where $l_t^d := 2^{-t/d}$ for $t<T$ and $l_T^d:=0$;
\end{enumerate}
\end{ass}

The motivation for item (2) is as follows. The quantity $F_{t-1} UG_{t-1} - \revise{E[\Delta UG_t|\mathcal{F}_{t-1}]} =: cf_t^{UG}$ may be viewed as a cash flow due to realizations of unrealized gains as assets approach their maturities (an example of this reasoning is contained in Remark~\ref{rem:roaEx}). Indeed, if an asset $a$ has maturity $T_a$, then we must have $UG_{T_a}^a = MV_{T_a}^a - BV_{T_a}^a = 0$, and thus $UG_t^a$ tends to $0$ as $t$ approaches $T_a$. As a proxy for the number $d$ we take the duration of the portfolio. Assuming that cash flows, $cf_t^{UG}$, due to realizations of unrealized gains are known at valuation time $t=0$, we obtain $UG_0 = \sum_{t=1}^T P(0,t)cf_t^{UG}$. Setting $\sum_{t=1}^T P(0,t)(F_{t-1} UG_{t-1} - \revise{E[ \Delta UG_t|\mathcal{F}_{t-1})]} =
\sum_{t=1}^T P(0,t)cf_t^{UG} = UG_0 = \sum_{t=1}^T(l_{t-1}^d - l_t^d)UG_0$ and insisting on equality of the summands leads to the above assumption.

\begin{ass}\label{ass:BVvar}
The coefficient of variation of book valued items is negligible in comparison to that of market movements. Concretely, the coefficients of variations of $DB_t$, $LP_t$ and $SF_t$ are assumed to be negligible in comparison to to those of $F_t$ and $B_t^{-1}$. 
\end{ass}

This assumption reflects the general principle that book values are expected to be more stable than market values since not all market movements are reflected in book values but rather lead to unrealized gains or losses (\cite{Dorobantu_etal20}). 

Invoking the above assumptions \ref{ass:geom}, \ref{ass:ROA}, \ref{ass:BVvar}, \ref{ass:tech}, \ref{ass:sigma} and \ref{ass:boundSF}, we define
\begin{align}
\label{e:gshat}
    \widehat{gs}_t 
    :&= 
    F_{t-1} E[BV_{t-1}] 
    + P(0,t)^{-1}(l_{t-1}^d - l_t^d)UG_0
    - \rho_t V_{t-1} + \gamma_t LP_{t-1} \\
    \notag 
    &=
    \Big(
    F_{t-1} 
    + P(0,t)^{-1}\frac{l_{t-1}^d-l_t^d}{l_{t-1}^h}\frac{UG_0}{(1+\vartheta)LP_0}
    -\frac{(1-\sigma)\rho_t-\gamma_t}{1+\vartheta}
    \Big) (1+\vartheta)l_{t-1}^h LP_0 
\end{align}
to be used as a simplified model for $gs_t$.

\section{Analytical lower and upper bounds for future discretionary benefits}\label{sec:LBandUB}
The model dependent quantities in \eqref{e:fdb-rep} are $I$, $II$, $III$ and $COG$. 
Calculating these explicitly is just as difficult as calculating the $FDB$. The purpose of this section is therefore to derive analytical bounds for these quantities, i.e.\ bounds which can be calculated without a numerical model. 

\subsection{Estimating $I$}
In accordance with Assumption \ref{ass:runoff}, we estimate $I$ by
\begin{equation}
\label{e:Ihat}
    \widehat{I} = 0.
\end{equation}
Compare also with the second statement in \cite[Prop.~2.2]{HG19}.

\subsection{Estimating $II$}
The expression $sg_t^* = \chi_{t} DB_{t}$ in Term~$II$ corresponds to the fraction of $DB_{t}$ that is freed up each year due to policyholder surrender fees and thus contributes to the company's surplus as a component of the surrender gains.
Assumptions~\ref{ass:BVvar}, \ref{ass:surr}, \ref{ass:sigma} and \ref{ass:geom} imply that term $II$ can be estimated as $II\le\widehat{II}$ with
\begin{align}
\label{e:II_hat}
    \widehat{II} 
    := (1-gph)\sum_{t=2}^T\gamma_t\sigma_t P(0,t)l_{t-1}^h LP_0.
\end{align}

\begin{remark}
\label{rem:surrII}
The product $(1-gph)\gamma_t\sigma_t$ is expected to be very small since this represents the shareholder and tax share of the surrender gains from future declared bonuses, such that $\widehat{II}$ should be also very small in comparison to $MV_0$.  
\end{remark}

\subsection{Bounding $III$ from above}
The essential idea is to use the recursive relation \eqref{e:DBSFevol} to obtain an upper bound for $DB_t+SF_t$.  Equation~\eqref{e:DBSFevol} implies 
\begin{align}
\label{e:recur}
    DB_t + SF_t 
    &= SF_0 + \sum_{s=1}^t ph_s^* - \sum_{s=2}^t (ph_s+sg_s^*) 
    = 
    SF_0 + gph\cdot gs_t^+
    + \sum_{s=1}^{t-1}\Big( 
     gph\cdot gs_s^+ - ph_{s+1} - sg_{s+1}^*
    \Big).
\end{align}
Assumptions~\ref{ass:nu} and \ref{ass:mu} yield 
\begin{align*}
    \sum_{s=1}^{t-1}\Big(ph_{s+1}+sg_{s+1}^*\Big)
    &= 
    \sum_{s=1}^{t-1}\sum_{k=1}^s\mu_k^{s+1}\Big(\eta_k\cdot SF_{k-1} + \nu_k\cdot ph_k^*\Big)\\
    &\ge 
    \nu \sum_{s=1}^{t-1}\sum_{k=1}^s\mu_k^{s+1}ph_k^* 
    = 
    \nu\sum_{k=1}^{t-1}\sum_{s=k}^{t-1}\frac{l_s^h-l_{s+1}^h}{l_k^h}ph_k^* 
    = 
    \nu\sum_{k=1}^{t-1}\frac{l_k^h-l_{t}^h}{l_k^h}ph_k^*
\end{align*}
whence \eqref{e:recur} satisfies 
\begin{equation}
    DB_t+SF_t 
    \le 
    SF_0 + gph\cdot gs_t^+ 
    + gph\cdot \sum_{s=1}^{t-1}
      \Big( 
       1 - \nu (1 - l_{t-s}^h )
      \Big) gs_s^+ .
\end{equation}
Thus 
\begin{align}
\label{e:III_1}
    III
    &=
    (1-gph)\sum_{t=0}^{T-1} E\Big[ B_{t+1}^{-1}F_{t}(DB_{t}+SF_{t}) \Big]  \\ 
    \notag
    &\le
    (1-gph) (1-P(0,T) ) SF_0 
    + (1-gph)gph \sum_{t=1}^{T-1} E\Big[B_{t+1}^{-1}F_{t} \cdot gs_t^+  \Big] \\
    \notag 
    &\phantom{===}
    + (1-gph)gph \sum_{t=2}^{T-1} \sum_{s=1}^{t-1} 
    \Big( 
       1 - \nu (1 - l_{t-s}^h )
      \Big)
    E\Big[B_{t+1}^{-1}F_{t}  
       \cdot gs_s^+ \Big] 
\end{align}
The expression $E[B_{t+1}^{-1}F_{t} \cdot gs_s^+ ]$ gives the fair value of the risk free return on $gs_s^+$ in the period from $t$ to $t+1$.

\subsection{Bounding $III$ from below} 
We use again equation~\eqref{e:recur}, and notice that equations~\eqref{e:ph_t-def} and \eqref{e:SF} imply
\begin{align*}
    \sum_{s=2}^t\Big( ph_s + sg_s^*\Big) 
    &= 
    \sum_{s=2}^t 
     \sum_{k=1}^{s-1} \mu_k^s\Big(SF_{k-1} - SF_k + ph_k^*\Big) %
    =
    \sum_{k=1}^{t-1}
     \sum_{s=k+1}^t \mu_k^s \Big(  
     SF_{k-1} - SF_k + ph_k^*
    \Big)
    \\
    &\le 
     SF_0 - SF_{t-1} + \sum_{k=1}^{t-1}ph_k^*  
\end{align*} 
since $\sum_{s=k+1}^t \mu_k^s\le 1$ due to \eqref{e:mu-le1}. 
Inserting this in \eqref{e:recur} yields  
$DB_t + SF_t \ge ph_t^* + SF_{t-1}$ for all $t\ge1$,
and therefore 
\begin{align*}
  III 
  &= (1-gph)E\Big[\sum_{t=0}^{T-1} F_t B_{t+1}^{-1}(DB_t+SF_t)\Big] \\
  \notag
  &\ge 
  (1-gph) F_0(1+F_0)^{-1} SF_0 +
  (1-gph)E\Big[\sum_{t=1}^{T-1} F_t B_{t+1}^{-1}
   \Big( gph\cdot gs_t^+ + SF_{t-1} \Big)\Big] .
\end{align*}
This estimate does not depend on any of the assumptions in Section~\ref{sec:assump}. Neglecting, in accordance with Assumption~\ref{ass:BVvar}, the variation of $SF_{t-1}$ in comparison to that of $F_{t-1}$, and using Assumptions~\ref{ass:boundSF} and \ref{ass:geom}, yields
\begin{align}
  \label{e:III_lb1}
  III 
  &\ge 
  (1-gph)\Big( 
   F_0(1+F_0)^{-1} SF_0 +
    \vartheta\sum_{t=1}^{T-1}(P(0,t)-P(0,t+1) )l_{t-1}^h LP_0
   \Big) \\ 
   \notag
   &\phantom{===}
   +
  gph(1-gph)E\Big[\sum_{t=1}^{T-1} F_t B_{t+1}^{-1} gs_t^+ \Big] .
\end{align}

\subsection{Estimating the return on the deferred caplet}
Let us rewrite 
$F_{t}B_{t+1}^{-1}  gs_s^+  
= (B_{t}^{-1}-B_{t+1}^{-1})   gs_s^+ 
= (D(s,t)-D(s,t+1))  B_s^{-1} gs_s^+$ and 
abbreviate the coefficients of variations as
\begin{equation}
    CV_{s,t}^1 :=
    CV\Big[D(s,t)-D(s,t+1)\Big] ,\qquad
    CV_s^2 :=
    CV\Big[ B_s^{-1} gs_s^+\Big]
    .
\end{equation} \label{ref: CV_s,t^1} \label{ref: CV_s^2}
Since $-1 \le Corr_s^t  := Corr[D(s,t)-D(s,t+1), B_s^{-1} gs_s^+] \le 1$, it follows that 
\begin{align*}
    E\Big[(D(s,t&)-D(s,t+1))  B_s^{-1} gs_s^+ \Big]
    = 
    E\Big[(D(s,t)-D(s,t+1))\Big] E\Big[ B_s^{-1} gs_s^+ \Big] \\
    &\phantom{===}
    + Corr_s^t
        \cdot CV_{s,t}^1 
              CV_s^2
        E\Big[(D(s,t)-D(s,t+1))\Big] E\Big[ B_s^{-1} gs_s^+ \Big]\\ 
    &\le 
    \Big(P(s,t) - P(s,t+1)\Big) E\Big[ B_s^{-1} gs_s^+ \Big]
    \Big( 1 + CV_{s,t}^1 CV_s^2  \Big)
\end{align*}
and 
\begin{equation*}
     E\Big[(D(s,t)-D(s,t+1))  B_s^{-1} gs_s^+ \Big]
     \ge 
     \Big(P(s,t) - P(s,t+1)\Big) E\Big[ B_s^{-1} gs_s^+ \Big]
    \Big( 1 - CV_{s,t}^1 CV_s^2  \Big)
\end{equation*}
Hence \eqref{e:III_lb1} and \eqref{e:III_1} lead to 
\begin{align}
\notag
  &(1-gph)\Big( 
   F_0(1+F_0)^{-1} SF_0 +
    \vartheta\sum_{t=1}^{T-1}(P(0,t)-P(0,t+1) )l_{t-1}^h LP_0
   \Big) \\ 
   \notag
   &\phantom{===}
   +
  gph(1-gph)\sum_{t=1}^{T-1}
  \Big(1 - CV_{0,t}^1\, CV_t^2 \Big)
      \Big(1 - P(t,t+1)\Big)
  E\Big[ B_{t}^{-1} gs_t^+ \Big] 
  \\
  \label{e:III_2}
    &\le III \\
    \notag 
    &\le
    (1-gph) (1-P(0,T)) SF_0 \\
    \notag 
    &\phantom{===}
    + (1-gph)gph \sum_{t=1}^{T-1} 
    \Big(1 + CV_{0,t}^1\, CV_t^2 \Big)
    \Big( 1 - P(t, t+1) \Big)
    E\Big[B_{t}^{-1}  gs_t^+  \Big] \\
    \notag 
    &\phantom{===}
    + (1-gph)gph \sum_{t=2}^{T-1} \sum_{s=1}^{t-1} 
    \Big( 
       1 - \nu (1 - l_{t-s}^h )
      \Big)
      \Big(1 + CV_{s,t}^1\, CV_s^2 \Big)
      \Big(P(s,t) - P(s,t+1)\Big) E\Big[ B_s^{-1} gs_s^+ \Big]
\end{align}
The term $E[ B_s^{-1} gs_s^+ ]$ is the value of the caplet with payoff $gs_s^+$ at time $s$. 

\subsection{Estimating the caplet}
Now we replace $gs_s$ by its simplified model $\widehat{gs}_s$ defined in \eqref{e:gshat}. This implies that the (simplified) caplet can be expressed as 
\begin{equation}
    E\Big[ B_s^{-1} \widehat{gs}_s^+ \Big]
    = 
    \mathcal{O}_s^+ (1+\vartheta)l_{s-1}^h LP_0 
\end{equation}
where 
\begin{equation}
    \label{e:O}
 \mathcal{O}_s^{\pm}
 := E\Big[B_s^{-1}\Big(
   F_{s-1} 
    + P(0,s)^{-1}\frac{l_{s-1}^d-l_s^d}{l_{s-1}^h}\frac{UG_0}{(1+\vartheta)LP_0}
    - \frac{(1-\sigma)\rho_t-\gamma_t}{1+\vartheta}
 \Big)^{\pm}\Big]
\end{equation} \label{ref: O_s}
is the value at $0$ of the caplet (corresponding to $+$) or floorlet (corresponding to $-$) with maturity $s-1$ and payment
payment  
$(
   F_{s-1} 
    + P(0,s)\frac{l_{s-1}^d-l_s^d}{l_{s-1}^h}\frac{UG_0}{(1+\vartheta)LP_0}
    - \frac{(1-\sigma)\rho_t-\gamma_t}{1+\vartheta}
 )^{\pm}$ occurring at settlement date $s$. In the normal model this value is given by the Black formula
 (\cite{Black76,BM06})
\begin{equation} 
\label{e:Black} 
    \mathcal{O}_s^{\pm} 
    = 
    P(0,s)\cdot\Big(
     \pm ( F_{s-1}^0 - k_s )\Phi(\pm \kappa_s) + \textup{IV}_s \sqrt{s}\phi(\pm \kappa_s)
     \Big)
\end{equation}
where $\Phi$ and $\phi$ are the normal cumulative distribution and density functions, respectively.  
Further,
\[
  \kappa_s  
  := 
  \frac{ F_{s-1}^0 - k_s }
   { \textup{IV}_s\sqrt{s} } 
\]
where $F_{s-1}^0 = P(0,s-1) - P(0,s)$ is the forward rate prevailing at time $0$, 
the strike is given by 
\begin{equation}
    \label{e:strike}
    k_s
    := 
    - P(0,s)^{-1}\frac{l_{s-1}^d-l_s^d}{l_{s-1}^h}\frac{UG_0}{(1+\vartheta)LP_0}
    + \frac{(1-\sigma)\rho_t-\gamma_t}{1+\vartheta}
\end{equation} \label{ref: k_s}
and $\textup{IV}_s$\label{ref: IV_s} is the caplet implied volatility known from market data.
\revise{Using the normal model at this point is, of course, an additional model choice.}

Therefore, estimate~\eqref{e:III_2} may be reformulated as 
\begin{equation}
    \label{e:III_3}
    \widehat{III}_{lb} \le
    III \le \widehat{III}_{ub}
\end{equation}
with 
\begin{align}
    \label{e:III_hat}
    \widehat{III}_{lb} 
    &:= 
    (1-gph)\Big( 
   F_0(1+F_0)^{-1} SF_0 +
    \vartheta\sum_{t=1}^{T-1}(P(0,t)-P(0,t+1) )l_{t-1}^h LP_0
   \Big) \\ 
   \notag
   &\phantom{===}
   +
  gph(1-gph)\sum_{t=1}^{T-1}
  \Big(1 - CV_{0,t}^1\, CV_t^2 \Big)
      \Big(1 - P(t,t+1)\Big)
      \mathcal{O}_t^+ (1+\vartheta)l_{t-1}^h LP_0
    \\ 
    \notag 
    \widehat{III}_{ub}
    &:= 
    (1-gph) (1-P(0,T)) SF_0 \\
    \notag 
    &\phantom{===}
    + (1-gph)gph \sum_{t=1}^{T-1} 
    \Big(1 + CV_{0,t}^1\, CV_t^2 \Big)
    \Big( 1 - P(t, t+1) \Big)
    \mathcal{O}_t^+ (1+\vartheta)l_{t-1}^h LP_0  \\
    \notag 
    &\phantom{===}
    + (1-gph)gph \sum_{t=2}^{T-1} \sum_{s=1}^{t-1} 
    \Big( 
       1 - \nu (1 - l_{t-s}^h )
      \Big)
      \Big(1 + CV_{s,t}^1\, CV_s^2 \Big)
      \Big(P(s,t) - P(s,t+1)\Big) \mathcal{O}_s^+ (1+\vartheta)l_{s-1}^h LP_0
\end{align}

\subsection{Approximating $COG$}
The shareholder cost of guarantees are defined in \eqref{e:COG}. We use again the simplified model \eqref{e:gshat} to estimate $COG$ by 
\begin{align}
    \label{e:COGhat}
    \widehat{COG}
    := 
    E\Big[
     \sum_{t=1}^T B_t^{-1} \widehat{gs}_t^-
    \Big]
    = 
    \sum_{t=1}^T \mathcal{O}_t^- (1+\vartheta)l_{t-1}^h LP_0 
\end{align}
where $\mathcal{O}_t^-$ is the value of the floorlet given in \eqref{e:Black}.

\subsection{Estimating $FDB$}
Under the assumptions of Section~\ref{sec:assump},
 the terms $COG$, $I$, $II$ and $III$ 
 can be estimated by  $\widehat{COG}$, $\widehat{I}=0$, $II\le\widehat{II}$ and $\widehat{III}_{lb} \le III \le \widehat{III}_{ub}$ as defined by \eqref{e:COGhat}, \eqref{e:Ihat}, \eqref{e:II_hat} and \eqref{e:III_hat}, respectively.
These estimates yield a lower bound, $\widehat{LB}$, and an upper bound, $\widehat{UB}$, for $FDB$: 
\begin{equation}
    \label{e:est-int}
    \widehat{LB} 
    \le FDB \le \widehat{UB}
\end{equation}
where 
\begin{align}
    \label{e:LB}
    \widehat{LB}
    &:= 
    SF_0 + gph\Big(LP_0+UG_0-GB\Big) 
    - \widehat{II}
    - \widehat{III}_{ub} 
    \\
    \label{e:UB}
    \widehat{UB}
    &:=
    SF_0 + gph\Big(LP_0+UG_0-GB\Big) + gph\cdot\widehat{COG}
    - \widehat{III}_{lb}. 
\end{align}
If the difference $\widehat{UB}-\widehat{LB}$ is sufficiently small (e.g., in comparison to $BV_0$), then $\widehat{FDB} = (\widehat{LB}+\widehat{UB})/2$ may be used as an estimator for $FDB$. 

These estimation formulas are analytic in the sense that they do not depend on a numerical model. For the reader's convenience we provide a compact list of data which have to be known or estimated in order to  calculate these bounds in Table~\ref{tbl:data_hat}.

\begin{remark}\label{rem:Art91}
The estimations $\widehat{LB}$ and $\widehat{UB}$ regard the $FDB$ as calculated with a stochastic cash flow model. The stochastic cash flow model is the numerical model used to generate cash flows relevant for best estimate calculation as in \cite{Gerstner08,VELP17}.
In those EU member states that have authorised Article~91(2) of Directive~2009/138/EC~\cite{L1}, the 
surplus fund (in fact, the part that is not used to compensate losses) is not considered as a liability and is therefore not part of the life assurance provision. Thus, if a company chooses to deduct the part, denoted by $SF_0^{\textup{Art.\,91}}$, of the surplus fund that is not used to absorb losses (in the risk neutral average over all scenarios), this is subtracted from the $FDB$ to yield the future discretionary benefits as reported to the supervisor and in financial statements, $FDB^{\textup{Art.\,91}} = FDB - SF_0^{\textup{Art.\,91}}$. Hence this information has to be known and when relevant the corresponding quantity has to be subtracted from the bounds $\widehat{LB}$ and $\widehat{UB}$. If this is the case we choose to approximate $SF_0^{\textup{Art.\,91}}$ by $SF_0$ itself in order to remain model free and thus use  $\widehat{LB}' = \widehat{LB} - SF_0$ and $\widehat{UB}' = \widehat{UB} - SF_0$. 
\end{remark}

\begin{table}[H] \centering{
\begin{enumerate}
\item
the balance sheet items $SF_0$, $LP_0$, $UG_0$, $GB$; 
\item 
the gross policyholder participation factor $gph$;
    \item 
    the initial discount curve $P(0,t)$ and interest rate implied volatilities $IV_t$;  
    \item 
    the coefficients of variation $CV_{s,t}^1$ and $CV_s^2$; 
    \item 
    duration factor $d$ \revise{in years};
    \item
    liability half life $h$ \revise{in years};
    \item 
    surplus fund fraction $\vartheta$;
    \item 
    bonus account factor $\sigma$;
    \item 
    bonus declaration bound $\nu$;
    \item 
    expected technical interest rate $\rho_t$;
    \item 
    expected technical gains rate $\gamma_t$; 
    \item
    the information concerning the application of Article 91 as in Remark~\ref{rem:Art91};
    \item 
    the projection time $T$;
\end{enumerate}
}
\caption{List of data needed to calculate $\widehat{LB}$ and $\widehat{UB}$.}
\label{tbl:data_hat}
\end{table}

\section{Public data} \label{sec:data}

\subsection{Allianz~Lebensversicherungs-AG: publicly reported values} 
The data in table~\ref{table:values} is taken from publicly available reports for the accounting years 2017-2019. The relevant references are listed in table~\ref{table: references}.

\begin{table}[H]
\centering
\begin{tabular}{lrrr}
Quantity                      & \multicolumn{1}{l}{2017} & \multicolumn{1}{c}{2018} & \multicolumn{1}{c}{2019}  \\ 
\hline
$L_0$                      & 189.8                                & 201.2                    & 219.6                     \\
$UG_0$                      & 41.4                                 & 32.8                     & 54.0                      \\
$SF_0$                      & 10.4                                 & 11.0                     & 11.5                      \\
Solvency~II value of~$SF_0$ & 10.9                                 & 10.5                     & 11.3                      \\
$GB$                        & 154.1                                & 158.8                    & 195.2                     \\
$FDB$                       & 48.6                                 & 46.2                     & 47.4                      \\
\hline
\end{tabular}
\vspace{2ex} \caption{Allianz Lebensversicherungs-AG: public data for 2017-2019, values are in billion euros.}
\label{table:values}
\end{table}

The value of $UG_0$ is already scaled to $L_0$, which is in line with the general assumption $(\ref{e:BV = LP + SF})$. The reason behind this scaling is that according to \cite[§~3]{MindZV} only the fraction of the capital gains, corresponding to the assets scaled to cover the average value of liabilities in the accounting year under consideration, contribute to the gross surplus.

As for $L_0$, we adjust the local GAAP value of life insurance with profit participation for necessary regrouping of business, as explained in \cite[p.~52]{SFCR}, \cite[p.~46]{SFCR2018}, \cite[p.~46]{SFCR2019} for the different accounting years 2017--2019. 

\begin{table}[H]
\begin{minipage}{\textwidth}
\centering
\begin{tabular}{lrrr}
Quantity                      & Source for 2017                                              & Source for 2018  & Source for 2019  \\ 
\hline
$L_0$\footnote{Versicherung mit Überschussbeteiligung}                      & \cite[p.~46,~52]{SFCR} & \cite[p.~42,~46]{SFCR2018}   & \cite[p.~42,~46]{SFCR2019}                                                                                           \\
$UG_0$\footnote{Stille Reserven der einzubeziehenden Kapitalanlagen}                      & \cite[p.~46]{GB}   & \cite[p.~42]{GB2018} & \cite[p.~46]{GB2019}                                                                                \\
$SF_0$\footnote{Rückstellung für Beitragsrückerstattung abzüglich\\festgelegte, aber noch nicht zugeteilte Teile}                     & \cite[p.~55]{GB}   &  \cite[p.~51]{GB2018} &  \cite[p.~55]{GB2019}\\
Solvency II value of~$SF_0$\footnote{Überschussfonds} & \cite[p.~52]{SFCR} & \cite[p.~46]{SFCR2018}  & \cite[p.~46]{SFCR2019}                                                                                                                  \\
$GB$\footnote{Bester Schätzwert: Wert für garantierte Leistungen}                        & \cite[p.~46]{SFCR} & \cite[p.~42]{SFCR2018}  & \cite[p.~42]{SFCR2019}                                                                                \\
$FDB$\footnote{Bester Schätzwert: zukünftige Überschussbeteiligung}                       & \cite[p.~46]{SFCR} &  \cite[p.~42]{SFCR2018} &  \cite[p.~42]{SFCR2019}                                                                           \\
\hline
\end{tabular}\vspace{2ex} \caption{Allianz Lebensversicherungs-AG: references for the data listed in table~\ref{table:values}.}
\label{table: references}
\end{minipage}
\end{table}

\subsection{Estimating technical gains from market data}\label{sec:tech_gains}
For the German life insurance market, technical gains can be determined from tables~130 and 141 in \cite{BaFin}. The relevant items are stated below:
\begin{table}[H]
\begin{minipage}{\textwidth}
\centering
\begin{tabular}{p{0.5\textwidth}crrr}
Quantity                      & \multicolumn{1}{l}{Symbol} & \multicolumn{1}{l}{2017} & \multicolumn{1}{c}{2018} & \multicolumn{1}{c}{2019}  \\ 
\hline
Gross surplus net of direct policyholder declarations\footnote{\"{U}berschuss} & $a$                       & 8.3                               & 9.9                    & 11.3                     \\
Direct policyholder declarations\footnote{Direktgutschrift} & $b$                     & 2.3                                 & 2.1                     & 2.1                      \\
Share of gross surplus allocated to the surplus fund\footnote{Zuf\"{u}hrung zur RfB} & $c$                     & 6.4                                 & 8.1                     & 9.3                      \\
Interest margin\footnote{Kapitalanlagenergebnis 1~b)} & $d$ & 3.5                                 & 5.2                     & 6.1                      \\
Gross technical provisions for direct business\footnote{Tabelle 130, Versicherungstechnische R\"{u}ckstellungen brutto, selbst abgeschlossenes Gesch\"{a}ft} & $e$ & 991.4
 & 1011.1
 & 1069.1\\
Gross technical provisions of those contracts where the investment risk is carried by the policyholder\footnote{Bilanzposten a)~6~a) brutto: Tabelle 130, Versicherungstechnische R\"{u}ckstellungen, soweit das Anlagerisiko vom Versicherungsnehmer getragen wird} & $f$                    & 109.1                                 & 101.7                     & 124.8                      \\
\hline
\end{tabular}
\vspace{2ex} \caption{BaFin: public data for 2017-2019, values are in billion euros.}
\label{table: items for gamma}
\end{minipage}
\end{table}

We estimate the technical gains relative to the life assurance provisions by $\widehat{\gamma}_{LP} := (a+b-d)/(e-f)$ and find
\begin{table}[H]
\centering
\begin{tabular}{lrrr}
                      & \multicolumn{1}{l}{2017} & \multicolumn{1}{c}{2018} & \multicolumn{1}{c}{2019}  \\ 
\hline
$\widehat{\gamma}_{LP}$                      & 0.80\,\%                               & 0.74\,\%                    & 0.78\,\%                     \\
\hline
\end{tabular}
\vspace{2ex} \caption{Values of $\gamma$ for 2017-2019.}
\label{table:gamma}
\end{table}

\subsection{Estimating $\rho$}\label{sec:rho}
The average technical interest rate of the Allianz Lebensversicherungs-AG can be derived from the distribution of life assurance provision over the guaranteed interest rates, with the following results:
\begin{table}[H]
\centering
\begin{tabular}{lrrr}
                      & \multicolumn{1}{l}{2017} & \multicolumn{1}{c}{2018} & \multicolumn{1}{c}{2019}  \\ 
\hline
$\widehat{\rho}$                      & 2.63\,\%                               & 2.52\,\%                    & 2.38\,\%                     \\
\hline
\end{tabular}
\vspace{2ex} \caption{Values of $\widehat{\rho}$ for 2017-2019.}
\label{table: values of rho}
\end{table}
The underlying data can be found in \cite[p.~34]{GB}, \cite[p.~33]{GB2018}, and \cite[p.~37]{GB2019}. (To obtain the values stated in table~\ref{table: values of rho}, we have taken the upper end points where technical interest rate intervals are provided.)


\subsection{Calculating $gph$}\label{sec:gph}
The net policyholder shares, $nph$ for the accounting years 2017-2019 can be obtained via $nph=(b+c)/a$ from the values collected in the table below (see \cite[p.~9]{GB}, \cite[p.~9]{GB2018}, and \cite[p.~8]{GB2019} for accounting years 2017-2019).
\begin{table}[H]
\begin{minipage}{\textwidth}
\centering
\begin{tabular}{lcrrr}
        Quantity              & Symbol & \multicolumn{1}{l}{2017} & \multicolumn{1}{c}{2018} & \multicolumn{1}{c}{2019}  \\ 
\hline
Gross surplus net of direct policyholder declarations\footnote{Brutto\"{u}berschuss} & $a$                     & 2.6                              & 3.1                   & 3.6  \\
Share of gross surplus allocated to the surplus fund\footnote{Zuf\"{u}hrung zur RfB} & $b$ & 2.0 & 2.3 & 2.9 \\
Direct policyholder declarations\footnote{Direktgutschrift} & $c$ & 0.1 & 0.1 & 0.2 \\
\hline
\end{tabular}
\vspace{2ex} \caption{Values are in billion euros.}
\label{table: items for gph}
\end{minipage}
\end{table}

The gross policyholder share, $gph$ is calculated from $nph$ according to the relation $gph = (1-\tau) nph / (1-\tau\cdot nph)$. Applying the German tax rate of $\tau = 29.9\,\%$ \cite[p.~16]{BMFtax}
yields the following table:
\begin{table}[H]
\centering
\begin{tabular}{lrrr}
        Quantity              & \multicolumn{1}{l}{2017} & \multicolumn{1}{c}{2018} & \multicolumn{1}{c}{2019}  \\ 
\hline
$nph$ & 80.8\,\% & 77.9\,\% & 85.6\,\% \\
$gph$ & 74.7\,\% & 71.2\,\% & 80.6\,\% \\
\hline
\end{tabular}
\vspace{2ex} \caption{Values of $gph$ for 2017-2019.}
\label{table:gph}
\end{table}

For the estimation of Term~$III$ we fix $gph = 75.5\,\%$ as the average of the values in table~\ref{table:gph}.

\subsection{Discount rates}
The following are the publicly available EIOPA discount rates for $2017$, $2018$ and $2019$. 

\begin{table}[H]
\centering
\begin{tabular}{llllllllllll}
$t$ & $P_{0,t}$ & $t$ & $P_{0,t}$                                       & $t$ & $P_{0,t}$ & $t$ & $P_{0,t}$ & $t$ & $P_{0,t}$ & $t$ & $P_{0,t}$  \\ 
\hline
1   &    1.003  & 11  & \begin{tabular}[c]{@{}l@{}}0.902\\\end{tabular} & 21  &    0.740  & 31  &    0.534  & 41  &    0.362  & 51  &    0.241   \\
2   & 1.004     & 12  & 0.885                                           & 22  & 0.720~    & 32  & 0.514~    & 42  & 0.348~    & 52  & 0.232~     \\
3   & 1.001     & 13  & 0.868                                           & 23  & 0.700     & 33  & 0.495     & 43  & 0.334     & 53  & 0.222~     \\
4   & 0.996     & 14  & 0.850                                           & 24  & 0.679     & 34  & 0.477     & 44  & 0.321~    & 54  & 0.214~     \\
5   & 0.988     & 15  & 0.834                                           & 25  & 0.658~    & 35  & 0.459~    & 45  & 0.308~    & 55  & 0.205~     \\
6   & 0.977     & 16  & 0.819                                           & 26  & 0.637     & 36  & 0.441~    & 46  & 0.296~    & 56  & 0.197~     \\
7   & 0.965     & 17  & 0.804                                           & 27  & 0.616~    & 37  & 0.424~    & 47  & 0.284~    & 57  & 0.189~     \\
8   & 0.951     & 18  & 0.790                                           & 28  & 0.595~    & 38  & 0.408     & 48  & 0.273~    & 58  & 0.181~     \\
9   & 0.936     & 19  & 0.774                                           & 29  & 0.574~    & 39  & 0.392~    & 49  & 0.262~    & 59  & 0.174~     \\
10  & 0.920     & 20  & 0.758                                           & 30  & 0.554     & 40  & 0.377     & 50  & 0.252     & 60  & 0.167      \\
\hline
\end{tabular}\vspace{2ex} \caption{Euro discount rates as of 31.12.2017. The rates are with volatility adjustment. 
Source: \cite{RFR}}
\label{table: Euro 2017}
\end{table}

\begin{table}[H]
\centering
\begin{tabular}{llllllllllll}
$t$ & $P_{0,t}$ & $t$ & $P_{0,t}$ & $t$ & $P_{0,t}$ & $t$ & $P_{0,t}$ & $t$ & $P_{0,t}$ & $t$ & $P_{0,t}$  \\ 
\hline
1   &    1.001  & 11  &    0.890  & 21  &    0.722  & 31  &    0.525  & 41  &    0.360  & 51  &    0.244   \\
2   & 1.001     & 12  & 0.872~    & 22  & 0.703~    & 32  & 0.506~    & 42  & 0.347~    & 52  & 0.234~     \\
3   & 0.998     & 13  & 0.853~    & 23  & 0.684~    & 33  & 0.488~    & 43  & 0.334~    & 53  & 0.225~     \\
4   & 0.992     & 14  & 0.835~    & 24  & 0.664~    & 34  & 0.470~    & 44  & 0.321~    & 54  & 0.217~     \\
5   & 0.983     & 15  & 0.818~    & 25  & 0.644~    & 35  & 0.453~    & 45  & 0.309~    & 55  & 0.208~     \\
6   & 0.972     & 16  & 0.803~    & 26  & 0.623~    & 36  & 0.436~    & 46  & 0.297~    & 56  & 0.200~     \\
7   & 0.958     & 17  & 0.788~    & 27  & 0.603~    & 37  & 0.420~    & 47  & 0.285~    & 57  & 0.192~     \\
8   & 0.943     & 18  & 0.773~    & 28  & 0.583~    & 38  & 0.405~    & 48  & 0.274~    & 58  & 0.185~     \\
9   & 0.926     & 19  & 0.757~    & 29  & 0.563~    & 39  & 0.389~    & 49  & 0.264~    & 59  & 0.178~     \\
10  & 0.908     & 20  & 0.740     & 30  & 0.544     & 40  & 0.375     & 50  & 0.254     & 60  & 0.171      \\
\hline
\end{tabular}\vspace{2ex} \caption{Euro discount rates as of 31.12.2018. The rates are with volatility adjustment.  Source: \cite{RFR}}
\label{table: Euro 2018}
\end{table}

\begin{table}[H]
\centering
\begin{tabular}{llllllllllll}
$t$ & $P_{0,t}$ & $t$ & $P_{0,t}$ & $t$ & $P_{0,t}$ & $t$ & $P_{0,t}$ & $t$ & $P_{0,t}$ & $t$ & $P_{0,t}$  \\ 
\hline
1   &    1.004  & 11  &    0.975  & 21  &    0.878  & 31  &    0.665  & 41  &    0.466  & 51  &    0.320   \\
2   & 1.006~    & 12  & 0.967~    & 22  & 0.861~    & 32  & 0.643~    & 42  & 0.449~    & 52  & 0.308~     \\
3   & 1.008    & 13  & 0.957~    & 23  & 0.842~    & 33  & 0.621~    & 43  & 0.432~    & 53  & 0.296~     \\
4   & 1.009~    & 14  & 0.947~    & 24  & 0.821~    & 34  & 0.600~    & 44  & 0.416~    & 54  & 0.285~     \\
5   & 1.008~    & 15  & 0.937~    & 25  & 0.800~    & 35  & 0.579~    & 45  & 0.401~    & 55  & 0.275~     \\
6   & 1.006~    & 16  & 0.929~    & 26  & 0.778~    & 36  & 0.559~    & 46  & 0.386~    & 56  & 0.264~     \\
7   & 1.001~    & 17  & 0.922~    & 27  & 0.755~    & 37  & 0.539~    & 47  & 0.372~    & 57  & 0.254~     \\
8   & 0.996     & 18  & 0.914~    & 28  & 0.733~    & 38  & 0.520~    & 48  & 0.358~    & 58  & 0.245~     \\
9   & 0.990~    & 19  & 0.904~    & 29  & 0.710~    & 39  & 0.501~    & 49  & 0.345~    & 59  & 0.236~     \\
10  & 0.982     & 20  & 0.893     & 30  & 0.687     & 40  & 0.483     & 50  & 0.332     & 60  & 0.227      \\
\hline
\end{tabular}\vspace{2ex} \caption{Euro discount rates as of 31.12.2019. The rates are with volatility adjustment. Source: \cite{RFR}}
\label{table: Euro 2019}
\end{table}

\section{Estimation of $\widehat{FDB}$ from public data}\label{sec:est}

We use the publicly available data  collected in Section~\ref{sec:data} to find the estimation interval for $\widehat{FDB}$ according to \eqref{e:est-int}, and compare the result with numerically calculated $FDB$ contained in the public data.\footnote{The calculations have been carried out in $R$ and the script files can be provided upon request.} In order to calculate $\widehat{LB}$ and $\widehat{UB}$ we have to know or estimate the  data listed in Table~\ref{tbl:data_hat}. This is done as follows: 

\begin{enumerate}
\item
the balance sheet items $SF_0$, $LP_0$, $UG_0$, $GB$: These are given in Table~\ref{table:values} with $L_0 - SF_0 = BV_0 - SF_0 = LP_0$. 
\item 
the gross policyholder participation factor $gph$: We use the average value of $gph = 75.5\,\%$ given in Table~\ref{table:gph}. The average is employed since this factor subsequently remains constant over the full projection time and should not depend on special circumstances at valuation time.
    \item 
    the initial discount curve $P(0,t)$ and interest rate implied volatilities $IV_t$: The discount curve is the relevant EIOPA curve as listed in Tables~\ref{table: Euro 2017}, \ref{table: Euro 2018} and \ref{table: Euro 2019}; the implied volatilities are taken from Bloomberg (end of year 2019) for available maturities, linearly interpolated between available maturities and extrapolated by keeping the last available volatility constant. This leads to $IV_t = 10 + 50(t-1)/21$ for $1\le t \le 21$ and $IV_t=50$ for $t\ge21$, expressed in basis points.  Additional sensitivity analysis is performed.
    \item 
    the coefficients of variation $CV_{s,t}^1$ and $CV_s^2$: Since the product of two such coefficients is expected to be small, and it is the product that enters the calculation of $\widehat{LB}$ and $\widehat{UB}$, these are estimated as $CV_{s,t}^1 \revise{\cdot} CV_s^2 = 0$. While this parameter choice is certainly very practical it is not very well founded from a theoretical perspective, and an estimation from historical data would be a more justifiable approach. 
    \item 
    duration factor $d$: This factor is known only to the company under consideration. We set $d=8$, and perform sensitivity analysis on this assumption. 
    \item
    liability half life $h$: This factor is known only to the company under consideration. We set $h=10$, and perform sensitivity analysis on this assumption. 
    \item 
    surplus fund fraction $\vartheta$: We take $\vartheta = SF_0/LP_0$, and perform sensitivity analysis.
    \item 
    bonus account factor $\sigma$: We choose $\sigma = 20\,\%$, and perform sensitivity analysis. 
    \item 
    bonus declaration lower bound $\nu$: We choose $\nu = 75\,\%$, and perform sensitivity analysis. 
    \item 
    expected technical interest rate $\rho_t$: We use the values constant $\hat{\rho} = \rho_t$ as listed in Table~\ref{table: values of rho}, and perform sensitivity analysis. 
    \item 
    expected technical gains rate $\gamma_t$: We use the constant  values $\hat{\gamma} = \gamma_t$ as listed in Table~\ref{table:gamma}, and perform sensitivity analysis. 
    \item 
    the information concerning the application of Article 91 as in Remark~\ref{rem:Art91}: the company in question does apply Article~91 as stated in \cite{SFCR} and  seen in Table~\ref{table:values}. Hence we subtract $SF_0$ from the bounds to estimate $FDB$.
    \item 
    the projection time $T$: we have chosen $T=50$ years to reflect the long term nature of life insurance;
\end{enumerate}

\revise{
\begin{remark}\label{rem:CV}
The coefficient of variation $CV_s^2$ cannot be estimated from market data. To estimate it correctly one would need a full numerical model to calculate the variation of $gs_s^+$. To avoid the need for a numerical best estimate model one may also approximate $CV_s^2$ by $CV[B_s^{-1}\widehat{gs}_s]$ using the simplified model \eqref{e:gshat}. However, management rules are usually structured so as to reduce variations in bonus declarations. Hence we expect that $CV_s^2$ should be sufficiently small so that the product $CV_{s,t}^1 \revise{\cdot} CV_s^2$ is negligible. 
\end{remark}
}

The results corresponding to these assumptions are referred to as the base case, and shown in Tables~\ref{table:Nr 0 } and \ref{table:Nr 1 } in absolute value (billion Euros) and in percent of the initial market value $MV_0 = LP_0+SF_0+UG_0$, respectively. The value $FDB$ is the numerically calculated number as reported by the company, see Table~\ref{table:values}. 

We use throughout the notation $\delta = \widehat{FDB}-FDB$ and  $\epsilon = (UB-LB)/2$, either as absolute values or relative to $MV_0$, as indicated. The estimation interval is thus given by $\widehat{FDB}\pm\epsilon$ and the estimation is considered successful if $|\delta|<\epsilon$ such that the true value, $FDB$, lies within this interval. This holds for the base case as-well as for all sensitivities.

\begin{table}[H]\phantom{X}\[
\revise{
\begin{matrix}
  & LP_0 & SF_0 & UG_0 & GB & FDB & \widehat{FDB} & \widehat{LB} & \widehat{UB} & \epsilon & \delta & \widehat{II} & \widehat{COG} \\ 
  2017 & 179.40 & 10.40 & 41.40 & 154.10 & 48.60 & 46.78 & 43.82 & 49.73 & 2.96 & -1.82 & 1.10 & 0.50 \\ 
  2018 & 190.20 & 11.00 & 32.80 & 158.80 & 46.20 & 45.16 & 42.24 & 48.08 & 2.92 & -1.04 & 1.07 & 0.85 \\ 
  2019 & 208.10 & 11.50 & 54.00 & 195.20 & 47.40 & 47.44 & 44.05 & 50.84 & 3.40 & 0.04 & 1.39 & 1.50 \\ 
   \end{matrix} }
\]\vspace{0.8mm}
\caption{Base case, displayed numbers are in billion Euros;} \label{table:Nr 0 } \end{table}

\begin{table}[H]\phantom{X}\[
\revise{
\begin{matrix}
  & LP_0 & SF_0 & UG_0 & GB & FDB & \widehat{FDB} & \widehat{LB} & \widehat{UB} & \epsilon & \delta & \widehat{II} & \widehat{COG} \\ 
 2017 & 77.60 & 4.50 & 17.91 & 66.65 & 21.02 & 20.23 & 18.95 & 21.51 & 1.28 & -0.79 & 0.48 & 0.21 \\ 
  2018 & 81.28 & 4.70 & 14.02 & 67.86 & 19.74 & 19.30 & 18.05 & 20.55 & 1.25 & -0.44 & 0.46 & 0.36 \\ 
  2019 & 76.06 & 4.20 & 19.74 & 71.35 & 17.32 & 17.34 & 16.10 & 18.58 & 1.24 & 0.02 & 0.51 & 0.55 \\ 
   \end{matrix} }
\]\vspace{0.8mm}
\caption{Base case, displayed numbers are in percent of $MV_0 = LP_0+SF_0+UG_0$;} \label{table:Nr 1 } \end{table}

Table~\ref{table:Nr 1 } shows that the estimation error, $\delta$, compared to the true value is in all three cases below $1\,\%$ of the initial market value, $MV_0$. We view this as quite a remarkable result for an analytically calculated approximation. Further, it is shown that the influence of $\widehat{II}$ on $\widehat{LB}$ is quite small. The estimated cost of guarantee, $\widehat{COG}$, \revise{increases noticeably} from 2018 to 2019. This is due to the significantly lower interest rate curve, as can be seen by comparing Tables~\ref{table: Euro 2017}, \ref{table: Euro 2018} and \ref{table: Euro 2019}. Generally speaking, one may also remark that the ratio $SF_0/LP_0$ has a significant impact on cost of guarantees: if $SF_0$ is large compared to $LP_0$ the basis, $BV_0=SF_0+LP_0$ for the return on assets is comparatively large and this is advantageous from the company's point of view since the guaranteed interest rate acts only on $V_0\le LP_0$.

\begin{table}[H]\phantom{X}\[
\revise{
\begin{matrix}
  & FDB & \widehat{FDB} & \widehat{LB} & \widehat{UB} & \epsilon & \delta \\ 
  2017 & 21.02 & 20.16 & 18.97 & 21.35 & 1.19 & -0.86 \\ 
  2018 & 19.74 & 19.18 & 18.08 & 20.29 & 1.10 & -0.56 \\ 
  2019 & 17.32 & 17.16 & 16.13 & 18.19 & 1.03 & -0.16 \\ 
   \end{matrix} }
\quad\left|\quad \revise{ \begin{matrix}
 \Delta^{\textup{rel}}\,\widehat{FDB} & \Delta^{\textup{rel}}\,\widehat{LB} & \Delta^{\textup{rel}}\,\widehat{UB} \\ 
 -0.35 & 0.08 & -0.74 \\ 
  -0.61 & 0.13 & -1.32 \\ 
  -1.01 & 0.19 & -2.26 \\ 
   \end{matrix} }
\right.\]\vspace{0.8mm}
\caption{Sensitivity: volatility is reduced by $50\,\%$, $IV_t' = IV_t/2$. 
LHS: displayed numbers are in percent of $MV_0 = LP_0+SF_0+UG_0$; 
RHS: difference to base case in percent of $FDB$;} 
\label{table:Nr 2 } \end{table}

Table~\ref{table:Nr 2 } shows the corresponding results for an implied volatility curve which has been reduced by $50\,\%$. The effect is most pronounced on the estimation of the upper bound, \revise{$\widehat{UB}$}, since it leads to a reduced cost of guarantees. Further, we notice that the estimation interval $\widehat{FDB}\pm\epsilon$ shrinks quite strongly. This makes sense since the estimation of a stochastic quantity should improve as the underlying volatility is reduced.  

The notation $\Delta^{\textup{rel}}$ in Table~\ref{table:Nr 2 }, as-well as below, is understood relative to the base case and in percent of $FDB$, that is $\Delta^{\textup{rel}}X = 100(X-X_0)/FDB$ where $X$ and $X_0$ are in billion Euros, and $X_0$ is taken from Table~\ref{table:Nr 0 }. Thus all the sensitivities in Table~\ref{table:Nr 2 } are at most of the order of $5\,\%\,FDB$, and would be almost $0\,\%$ when compared to $MV_0$. This observation holds also for all the other sensitivities considered subsequently.

\begin{table}[H]\phantom{X}\[
\revise{
\begin{matrix}
  & FDB & \widehat{FDB} & \widehat{LB} & \widehat{UB} & \epsilon & \delta \\ 
 2017 & 21.02 & 20.47 & 18.90 & 22.05 & 1.57 & -0.55 \\ 
  2018 & 19.74 & 19.62 & 17.99 & 21.26 & 1.64 & -0.12 \\ 
  2019 & 17.32 & 17.75 & 16.02 & 19.48 & 1.73 & 0.43 \\ 
   \end{matrix} }
\quad\left|\quad \revise{ \begin{matrix}
  \Delta^{\textup{rel}}\,\widehat{FDB} & \Delta^{\textup{rel}}\,\widehat{LB} & \Delta^{\textup{rel}}\,\widehat{UB} \\ 
 1.15 & -0.25 & 2.55 \\ 
  1.62 & -0.32 & 3.59 \\ 
  2.38 & -0.44 & 5.17 \\ 
   \end{matrix} }
\right.\]\vspace{0.8mm}
\caption{ 
\revise{Sensitivity:} volatility increased by $50\,\%$, $IV_t' = 1.5\,IV_t$; LHS: displayed numbers are in percent of $MV_0$; 
              RHS: difference to base case in percent of $FDB$;} 
           \label{table:Nr 13 } \end{table}

Table~\ref{table:Nr 13 } shows the effect of increasing volatility by $50\,\%$ which leads to a noticeable increase in $\widehat{COG}$ and therefore of $\widehat{UB}$. Increasing the volatility thus also implies a larger estimation error $\pm\epsilon$. This effect is most pronounced for 2019 because of the very low interest rate environment.

\begin{table}[H]\phantom{X}\[
\revise{
\begin{matrix}
  & FDB & \widehat{FDB} & \widehat{LB} & \widehat{UB} & \epsilon & \delta \\ 
 2017 & 21.02 & 20.02 & 18.62 & 21.43 & 1.41 & -1.00 \\ 
  2018 & 19.74 & 19.07 & 17.72 & 20.41 & 1.34 & -0.68 \\ 
  2019 & 17.32 & 17.12 & 15.84 & 18.41 & 1.29 & -0.20 \\ 
   \end{matrix}  }
\quad\left|\quad \revise{ \begin{matrix}
 \Delta^{\textup{rel}}\,\widehat{FDB} & \Delta^{\textup{rel}}\,\widehat{LB} & \Delta^{\textup{rel}}\,\widehat{UB} \\ 
  -1.01 & -1.60 & -0.39 \\ 
  -1.19 & -1.67 & -0.69 \\ 
  -1.24 & -1.52 & -0.99 \\ 
   \end{matrix}   }
\right.\]\vspace{0.8mm} 
\caption{Sensitivity:  $\rho'=0.75\,\rho$;
LHS: displayed numbers are in percent of $MV_0 = LP_0+SF_0+UG_0$; 
RHS: difference to base case in percent of $FDB$;} \label{table:Nr 3 } \end{table}

Table~\ref{table:Nr 3 } shows that the effect of reducing the (constant) technical interest rate by $25\,\%$ is quite small compared to $FDB$.

\begin{table}[H]\phantom{X}\[
\revise{
\begin{matrix}
  & FDB & \widehat{FDB} & \widehat{LB} & \widehat{UB} & \epsilon & \delta \\ 
 2017 & 21.02 & 20.57 & 19.27 & 21.86 & 1.30 & -0.46 \\ 
  2018 & 19.74 & 19.82 & 18.33 & 21.30 & 1.49 & 0.07 \\ 
  2019 & 17.32 & 17.63 & 16.35 & 18.91 & 1.28 & 0.31 \\ 
   \end{matrix} }
\quad\left|\quad \revise{ \begin{matrix}
 \Delta^{\textup{rel}}\,\widehat{FDB} & \Delta^{\textup{rel}}\,\widehat{LB} & \Delta^{\textup{rel}}\,\widehat{UB} \\ 
 1.58 & 1.50 & 1.69 \\ 
  2.62 & 1.41 & 3.81 \\ 
  1.69 & 1.43 & 1.90 \\ 
   \end{matrix} }
\right.\]\vspace{0.8mm}
\caption{Sensitivity:  $\rho'=1.25\,\rho$;
LHS: displayed numbers are in percent of $MV_0 = LP_0+SF_0+UG_0$; 
RHS: difference to base case in percent of $FDB$;} 
\label{table:Nr 4 } \end{table}

Table~\ref{table:Nr 4 } shows that the effect of increasing the (constant) technical interest rate by $25\,\%$ is quite small compared to $FDB$. However, it can also be seen that this conclusion depends on the specific circumstances, as the conditions corresponding to 2018   lead to a \revise{slightly more pronounced} effect. 

\begin{table}[H]\phantom{X}\[
\revise{
\begin{matrix}
  & FDB & \widehat{FDB} & \widehat{LB} & \widehat{UB} & \epsilon & \delta \\ 
 2017 & 21.02 & 20.57 & 19.44 & 21.71 & 1.14 & -0.45 \\ 
  2018 & 19.74 & 19.76 & 18.49 & 21.03 & 1.27 & 0.02 \\ 
  2019 & 17.32 & 17.69 & 16.56 & 18.83 & 1.13 & 0.37 \\ 
   \end{matrix} }
\quad\left|\quad \revise{ \begin{matrix}
 \Delta^{\textup{rel}}\,\widehat{FDB} & \Delta^{\textup{rel}}\,\widehat{LB} & \Delta^{\textup{rel}}\,\widehat{UB} \\ 
 1.63 & 2.28 & 0.97 \\ 
  2.34 & 2.23 & 2.45 \\ 
  2.05 & 2.66 & 1.41 \\ 
   \end{matrix} }
\right.\]\vspace{0.8mm}
\caption{
Sensitivity:  $\gamma'=0.5\,\gamma$;
LHS: displayed numbers are in percent of $MV_0$; 
RHS: difference to base case in percent of $FDB$;
%
} 
\label{table:Nr 5 } \end{table}

Table~\ref{table:Nr 5 } shows that the effect of reducing the (constant) technical gains rate by $50\,\%$ is quite small compared to $FDB$. 

\begin{table}[H]\phantom{X}\[
\revise{
\begin{matrix}
  & FDB & \widehat{FDB} & \widehat{LB} & \widehat{UB} & \epsilon & \delta \\ 
 2017 & 21.02 & 19.95 & 18.46 & 21.45 & 1.49 & -1.07 \\ 
  2018 & 19.74 & 19.01 & 17.58 & 20.43 & 1.43 & -0.74 \\ 
  2019 & 17.32 & 17.03 & 15.63 & 18.44 & 1.40 & -0.29 \\ 
   \end{matrix} }
\quad\left|\quad \revise{ \begin{matrix}
 \Delta^{\textup{rel}}\,\widehat{FDB} & \Delta^{\textup{rel}}\,\widehat{LB} & \Delta^{\textup{rel}}\,\widehat{UB} \\ 
 -1.34 & -2.35 & -0.31 \\ 
  -1.47 & -2.38 & -0.56 \\ 
  -1.77 & -2.70 & -0.84 \\ 
   \end{matrix} } 
\right.\]\vspace{0.8mm}
\caption{
Sensitivity:  $\gamma'=1.5\,\gamma$;
LHS: displayed numbers are in percent of $MV_0$; 
RHS: difference to base case in percent of $FDB$;
%
} \label{table:Nr 6 } \end{table}

Table~\ref{table:Nr 6 } shows that the effect of increasing the (constant) technical gains rate by $50\,\%$ is quite small compared to $FDB$. 

\begin{table}[H]\phantom{X}\[
\revise{
\begin{matrix}
  & FDB & \widehat{FDB} & \widehat{LB} & \widehat{UB} & \epsilon & \delta \\ 
 2017 & 21.02 & 20.30 & 18.98 & 21.61 & 1.32 & -0.72 \\ 
  2018 & 19.74 & 19.37 & 18.08 & 20.66 & 1.29 & -0.37 \\ 
  2019 & 17.32 & 17.37 & 16.12 & 18.63 & 1.25 & 0.05 \\ 
   \end{matrix} }
\quad\left|\quad \revise{ \begin{matrix}
 \Delta^{\textup{rel}}\,\widehat{FDB} & \Delta^{\textup{rel}}\,\widehat{LB} & \Delta^{\textup{rel}}\,\widehat{UB} \\ 
 0.29 & 0.12 & 0.49 \\ 
  0.35 & 0.15 & 0.54 \\ 
  0.19 & 0.08 & 0.25 \\  
   \end{matrix} }
\right.\]\vspace{0.8mm}
\caption{Sensitivity:  $\theta'=0.5\,\theta$;
LHS: displayed numbers are in percent of $MV_0$; 
RHS: difference to base case in percent of $FDB$;
%
} \label{table:Nr 7 } \end{table}

Table~\ref{table:Nr 7 } shows that the effect of reducing  $\theta$, estimated by $\theta=SF_0/LP_0$, by $50\,\%$ is  quite small compared to $FDB$.

\begin{table}[H]\phantom{X}\[
\revise{
\begin{matrix}
  & FDB & \widehat{FDB} & \widehat{LB} & \widehat{UB} & \epsilon & \delta \\ 
 2017 & 21.02 & 20.17 & 18.93 & 21.41 & 1.24 & -0.85 \\ 
  2018 & 19.74 & 19.23 & 18.02 & 20.44 & 1.21 & -0.51 \\ 
  2019 & 17.32 & 17.31 & 16.08 & 18.54 & 1.23 & -0.01 \\ 
   \end{matrix} }
\quad\left|\quad \revise{ \begin{matrix}
 \Delta^{\textup{rel}}\,\widehat{FDB} & \Delta^{\textup{rel}}\,\widehat{LB} & \Delta^{\textup{rel}}\,\widehat{UB} \\ 
 -0.31 & -0.12 & -0.47 \\ 
  -0.35 & -0.15 & -0.52 \\ 
  -0.17 & -0.08 & -0.25 \\ 
   \end{matrix} }
\right.\]\vspace{0.8mm}
\caption{
Sensitivity:  $\theta'=1.5\,\theta$;
LHS: displayed numbers are in percent of $MV_0$; 
RHS: difference to base case in percent of $FDB$;
%
} \label{table:Nr 8 } \end{table}

Table~\ref{table:Nr 8 } shows that the effect of increasing  $\theta$, estimated by $\theta=SF_0/LP_0$, by $50\,\%$ is  quite small compared to $FDB$.

\begin{table}[H]\phantom{X}\[
\revise{
\begin{matrix}
  & FDB & \widehat{FDB} & \widehat{LB} & \widehat{UB} & \epsilon & \delta \\ 
 2017 & 21.02 & 20.48 & 19.36 & 21.60 & 1.12 & -0.54 \\ 
  2018 & 19.74 & 19.63 & 18.43 & 20.83 & 1.20 & -0.12 \\ 
  2019 & 17.32 & 17.60 & 16.48 & 18.71 & 1.12 & 0.27 \\ 
   \end{matrix} }
\quad\left|\quad \revise{ \begin{matrix}
 \Delta^{\textup{rel}}\,\widehat{FDB} & \Delta^{\textup{rel}}\,\widehat{LB} & \Delta^{\textup{rel}}\,\widehat{UB} \\ 
 1.15 & 1.91 & 0.43 \\ 
  1.67 & 1.90 & 1.43 \\ 
  1.50 & 2.19 & 0.76 \\ 
   \end{matrix} }
\right.\]\vspace{0.8mm}
\caption{
Sensitivity:  $\sigma'=0.5\,\sigma$;
LHS: displayed numbers are in percent of $MV_0$; 
RHS: difference to base case in percent of $FDB$;
} \label{table:Nr 9 } \end{table}

Table~\ref{table:Nr 9 } shows that the effect of reducing $\sigma$, estimated by the chosen value $\sigma=20\,\%$, by $50\,\%$ is quite small compared to $FDB$.

\begin{table}[H]\phantom{X}\[
\revise{
\begin{matrix}
  & FDB & \widehat{FDB} & \widehat{LB} & \widehat{UB} & \epsilon & \delta \\ 
 2017 & 21.02 & 20.01 & 18.55 & 21.47 & 1.46 & -1.01 \\ 
  2018 & 19.74 & 19.06 & 17.66 & 20.46 & 1.40 & -0.68 \\ 
  2019 & 17.32 & 17.10 & 15.72 & 18.49 & 1.38 & -0.22 \\ 
   \end{matrix} }
\quad\left|\quad \revise{ \begin{matrix}
 \Delta^{\textup{rel}}\,\widehat{FDB} & \Delta^{\textup{rel}}\,\widehat{LB} & \Delta^{\textup{rel}}\,\widehat{UB} \\ 
 -1.09 & -1.93 & -0.21 \\ 
  -1.21 & -1.97 & -0.45 \\ 
  -1.37 & -2.22 & -0.55 \\ 
   \end{matrix} }
\right.\]\vspace{0.8mm}
\caption{
Sensitivity:  $\sigma'=1.5\,\sigma$;
LHS: displayed numbers are in percent of $MV_0$; 
RHS: difference to base case in percent of $FDB$;
} \label{table:Nr 10 } \end{table}

Table~\ref{table:Nr 10 } shows that the effect of increasing $\sigma$, estimated by the chosen value $\sigma=20\,\%$, by $50\,\%$ is quite small compared to $FDB$.

\begin{table}[H]\phantom{X}\[
\revise{
\begin{matrix}
  & FDB & \widehat{FDB} & \widehat{LB} & \widehat{UB} & \epsilon & \delta \\ 
 2017 & 21.02 & 20.10 & 18.69 & 21.51 & 1.41 & -0.92 \\ 
  2018 & 19.74 & 19.19 & 17.83 & 20.55 & 1.36 & -0.55 \\ 
  2019 & 17.32 & 17.23 & 15.88 & 18.58 & 1.35 & -0.09 \\ 
   \end{matrix} } 
\quad\left|\quad \revise{ \begin{matrix}
 \Delta^{\textup{rel}}\,\widehat{FDB} & \Delta^{\textup{rel}}\,\widehat{LB} & \Delta^{\textup{rel}}\,\widehat{UB} \\ 
 -0.64 & -1.26 & 0.00 \\ 
  -0.56 & -1.13 & 0.00 \\ 
  -0.61 & -1.27 & 0.00 \\ 
   \end{matrix} }
\right.\]\vspace{0.8mm}
\caption{Sensitivity:  
        $\nu' = 0.75\,\nu$; LHS: displayed numbers are in percent of $MV_0$; 
              RHS: difference to base case in percent of $FDB$;} 
           \label{table:Nr 11 } \end{table}

Table~\ref{table:Nr 11 } shows that the effect of decreasing $\nu$, estimated by the chosen value $\nu=75\,\%$, by $25\,\%$ is quite small compared to $FDB$.

\begin{table}[H]\phantom{X}\[
\revise{ 
\begin{matrix}
  & FDB & \widehat{FDB} & \widehat{LB} & \widehat{UB} & \epsilon & \delta \\ 
 2017 & 21.02 & 20.36 & 19.22 & 21.51 & 1.15 & -0.66 \\ 
  2018 & 19.74 & 19.41 & 18.27 & 20.55 & 1.14 & -0.33 \\ 
  2019 & 17.32 & 17.45 & 16.32 & 18.58 & 1.13 & 0.13 \\ 
   \end{matrix} }
\quad\left|\quad \revise{ \begin{matrix}
 \Delta^{\textup{rel}}\,\widehat{FDB} & \Delta^{\textup{rel}}\,\widehat{LB} & \Delta^{\textup{rel}}\,\widehat{UB} \\ 
 0.62 & 1.26 & 0.00 \\ 
  0.56 & 1.13 & 0.00 \\ 
  0.63 & 1.27 & 0.00 \\ 
   \end{matrix} }
\right.\]\vspace{0.8mm}
\caption{Sensitivity:  
 $\nu' = 1.25\,\nu$; LHS: displayed numbers are in percent of $MV_0$; 
              RHS: difference to base case in percent of $FDB$;} 
           \label{table:Nr 12 } \end{table}

Table~\ref{table:Nr 12 } shows that the effect of increasing $\nu$, estimated by the chosen value $\nu=75\,\%$, by $25\,\%$ is quite small compared to $FDB$. \revise{The parameter $\nu$ does not enter the estimation formula \eqref{e:UB} for the upper bound and hence  $\Delta^{\textup{rel}}\,\widehat{UB} = 0$ in Tables~\ref{table:Nr 11 } and \ref{table:Nr 12 }.
}

\begin{table}[H]\phantom{X}\[
\revise{
\begin{matrix}
  & FDB & \widehat{FDB} & \widehat{LB} & \widehat{UB} & \epsilon & \delta \\ 
 2017 & 21.02 & 20.26 & 18.98 & 21.55 & 1.28 & -0.76 \\ 
  2018 & 19.74 & 19.38 & 18.06 & 20.71 & 1.32 & -0.36 \\ 
  2019 & 17.32 & 17.32 & 16.12 & 18.53 & 1.21 & -0.00 \\ 
   \end{matrix}}
\quad\left|\quad\revise{ \begin{matrix}
 \Delta^{\textup{rel}}\,\widehat{FDB} & \Delta^{\textup{rel}}\,\widehat{LB} & \Delta^{\textup{rel}}\,\widehat{UB} \\ 
 0.12 & 0.10 & 0.16 \\ 
  0.43 & 0.04 & 0.82 \\ 
  -0.11 & 0.08 & -0.32 \\ 
   \end{matrix}}
\right.\]\vspace{0.8mm}
\caption{Sensitivity:  
$d'=10=h$
LHS: displayed numbers are in percent of $MV_0$; 
              RHS: difference to base case in percent of $FDB$;} 
           \label{table:Nr 15 } \end{table}

Table~\ref{table:Nr 15 } shows that the effect of increasing $d=8$ to $d+2$, while leaving $h=10$ unchanged, is quite small compared to $FDB$.

\begin{table}[H]\phantom{X}\[
\revise{
\begin{matrix}
  & FDB & \widehat{FDB} & \widehat{LB} & \widehat{UB} & \epsilon & \delta \\ 
 2017 & 21.02 & 20.15 & 18.69 & 21.61 & 1.46 & -0.87 \\ 
  2018 & 19.74 & 19.24 & 17.80 & 20.68 & 1.44 & -0.51 \\ 
  2019 & 17.32 & 17.37 & 15.88 & 18.85 & 1.48 & 0.04 \\ 
   \end{matrix} }
\quad\left|\quad\revise{ \begin{matrix}
 \Delta^{\textup{rel}}\,\widehat{FDB} & \Delta^{\textup{rel}}\,\widehat{LB} & \Delta^{\textup{rel}}\,\widehat{UB} \\ 
 -0.39 & -1.23 & 0.47 \\ 
  -0.30 & -1.28 & 0.67 \\ 
  0.17 & -1.24 & 1.56 \\ 
   \end{matrix}}
\right.\]\vspace{0.8mm}
\caption{Sensitivity:  
$h' = 12 = h+2$;
LHS: displayed numbers are in percent of $MV_0$; 
RHS: difference to base case in percent of $FDB$;} 
           \label{table:Nr 14 } \end{table}
           
Table~\ref{table:Nr 14 } shows that the effect of increasing $h=10$ to $h+2$, while leaving $d=8$ unchanged, is quite small compared to $FDB$.


\section{Conclusions}\label{sec:concl}
The bounds \eqref{e:LB} and \eqref{e:UB} have been derived in a manner which is quite basic from the mathematical point of view but seems at the same time adequate for real world applications. 
We view the accuracy of $\widehat{FDB}$ as demonstrated by $\delta = (\widehat{FDB}-FDB)/MV_0 < 1\,\%$ and $\epsilon = (\widehat{UB}-\widehat{LB})/MV_0 < \revise{1.5}\,\%$ in all three cases in Table~\ref{table:Nr 1 } as quite remarkable. 

However, it would certainly also be interesting to further refine the model: for example one could relax the assumption~\ref{ass:ROA} and attempt to model $ROA_t$, or also the difference $ROA_t - E[ROA_t|\mathcal{F}_{t-1}]$ along the lines of \cite{Dorobantu_etal20}. Moreover, the estimation of the return on the deferred caplet in equation~\eqref{e:III_2} relies on simply estimating the absolute value of a correlation factor by $1$, and this estimate could possibly be improved.   At the same time the product, $CV_{s,t}^1CV_s^2$,  of the coefficients of variation as mentioned in Table~\ref{tbl:data_hat} has been simply set to $0$ in Section~\ref{sec:est}. The validity of this parameter choice remains to be analyzed.

\section*{Competing interests declaration} 
The authors declare no competing interests.

\revise{
\section*{Acknowledgements} 
We thank our colleagues Wolfgang Herold and Sanela Omerovic for valuable comments and discussions. We are also grateful to the referees for their detailed and helpful comments.  
}
 
\section*{\revise{Appendix:} List of symbols}
\begin{longtable}[H]{lp{9cm}p{5cm}l}
\textbf{Symbol} & \textbf{Meaning}                        & \textbf{Definition}                   & \textbf{Reference}                      \\ \cline{1-4}
\textbf{A} & & & \\
$\mathcal{A}_t$ & set of assets, excluding cash, at time $t$ & -- & p.\,\pageref{ref: A_t} \\
& & & \\
\textbf{B} & & & \\
$B_t$ & bank account at time $t$ & $B_t = \prod_{j=0}^{t-1}(1+F_j)$ & p.\,\pageref{ref: B_t} \\
$BE$ & best estimate & $
BE=E[\sum_{t=1}^TB_t^{-1}(gbf_t+gbf_t^{\le0}+ph_t + co_t - pr_t)]
%
$ & p.\,\pageref{ref: BE} \\
$BV_t$  & book value of the asset portfolio at time $t$ & $BV_t = \sum_{a\in\mathcal{A}_t}BV_t^a + C_t$ & p.\,\pageref{ref: BV_t} \\
$BV_t^a$ & book value of asset $a$ & -- & p.\,\pageref{ref: BV_t^a} \\
& & & \\
\textbf{C} & & & \\
$C_t$ & amount of cash held by the company at time $t$ & & p.\,\pageref{ref: C_t} \\
$cf_t^a$ & cash flow of asset $a$ at time $t$ & -- & p.\,\pageref{ref: cf_t^a} \\
$\chi_t$ & surrender fee factor at time $t$ & -- & p.\,\pageref{ref: chi_t} \\
$COG$ & cost of guarantees & $COG = E\left[\sum_{t=1}^T B_t^{-1} gs_t^-\right]$ & p.\,\pageref{ref: COG} \\
$co_t$ & cost cash flows at time $t$ & -- & p.\,\pageref{ref: co_t} \\
$CV_{s,t}^1$ & first coefficient of variation & $CV_{s,t}^1 = CV\biggl[D(s,t)-D(s,t+1)\biggr]$ & p.\,\pageref{ref: CV_s,t^1}  \\
$CV_s^2$ & second coefficient of variation & $CV_s^2 = CV\biggl[B_s^{-1}gs_s^+\biggr]$ & p.\,\pageref{ref: CV_s^2} \\
& & & \\
\textbf{D} & & & \\
$d$ & duration & -- & p.\,\pageref{ref: d}\\
$D(t,s)$ & discount factor from $s$ to $t<s$ & $D(t,s) = \prod_{j=t}^{s-1}(1+F_j)^{-1}$ & p.\,\pageref{ref: D(t,s)} \\
$DB_t$ & declared bonuses after valuation time & $DB_t = \sum_{x\in\mathcal{X}_t}DB_t^x$ & p.\,\pageref{ref: DB_t} \\
$DB_t^x$ & declared bonuses after valuation time of model point $x$ at time $t$ & -- & p.\,\pageref{ref: DB_t^x} \\
$DB_t^-$ & account of declared bonuses before bonus declaration at time $t$ & -- & p.\,\pageref{ref: DB_t^-} \\
$DB_t^{\le0}$ & declared bonuses up to and including valuation time & $DB_t^{\le0} = \sum_{x\in\mathcal{X}_t}(DB_t^{\le0})^x$ & p.\,\pageref{ref: DB_t^le0} \\
$(DB_t^{\le0})^x$ & declared bonuses up to and including valuation time of model point $x$ at time $t$ & -- & p.\,\pageref{ref: (DB_t^le0)^x} \\
$\Delta f_t$ & increment of $f_t$ & $\Delta f_t = f_t - f_{t-1}$ & p.\,\pageref{ref: Delta f} \\
& & & \\
\textbf{E} & & & \\
$\eta_t$ & fraction of declaration of $SF_{t-1}$ to $DB_t$ & -- & p.\,\pageref{ref: eta_k} \\
& & & \\
\textbf{F} & & & \\
$F_t$ & simple one year forward rate between $t$ and $t+1$ & -- & p.\,\pageref{ref: F_t} \\
$FC_t$ & free capital at time $t$ & $FC_t = BV_t - L_t$ & p.\,\pageref{ref: FC_t} \\
$FDB$ & value of future discretionary benefits & $FDB = E\left[\sum_{t=1}^T B_t^{-1}ph_t\right]$ & p.\,\pageref{ref: FDB} \\
& & & \\
\textbf{G} & & & \\
$\gamma_t$ & fraction of technical gains & $\gamma_t = (tg_t + \chi_t^{\le0}DB_{t-1}^{\le0} + \chi_t DB_{t-1})/LP_{t-1}$ & p.\,\pageref{ref: gamma_t} \\
$GB$ & value of guaranteed benefits & $GB = BE - FDB$ & p.\,\pageref{ref: GB} \\
$gbf_t$ & guaranteed benefits at time $t$ & $gbf_t = \sum_{x\in\mathcal{X}_t}gbf_t^x$ & p.\,\pageref{ref: gbf_t} \\
$gbf_t^x$ & guaranteed benefits generated by model point $x$ at time $t$ & -- & p.\,\pageref{ref: gbf_t^x} \\
$gbf_t^{\le0}$ & cash flows due to $DB_{t-1}^{\le0}$ & $gbf_t^{\le0} = \sum_{x\in\mathcal{X}_t}(gbf_t^{\le0})^x$ & p.\,\pageref{ref: gbf_t^le0} \\
$(gbf_t^{\le0})^x$  & cash flows due to $(DB_{t-1}^{\le0})^x$ & -- & p.\,\pageref{ref: (gbf_t^le0)^x} \\
$gph$ & policyholder share in gross surplus & -- & p.\,\pageref{ref: sh_t} \\
$gs_t$ & gross surplus at time $t$ & $gs_t = ROA_t - \Delta V_t - \Delta DB_t^{\le0} - DB_t^- + DB_{t-1} + pr_t - gbf_t - gbf_t^{\le0} - ph_t - co_t$ & p.\,\pageref{ref: gs_t} \\
$gsh$ & share holder share in gross surplus & -- & p.\,\pageref{ref: sh_t} \\
$gtax$ & tax paid on gross surplus at time $t$ & -- & p.\,\pageref{ref: tax_t} \\
& & & \\
\textbf{H} & & & \\
$h$ & half life of assurance provisions & -- & p.\,\pageref{ref: h} \\
& & & \\
\textbf{I} & & & \\
$IV_s$ & caplet implied volatility & -- & p.\,\pageref{ref: IV_s} \\
& & & \\
\textbf{K} & & & \\
$k_s$ & strike & -- & p.\,\pageref{ref: k_s} \\
& & & \\
\textbf{L} & & & \\
$L_t$           & book value of liabilities at time $t$                             & $L_t = LP_t + SF_t$                   & p.\,\pageref{ref: L_t} \\
$LP_t$ & life assurance provision at time $t$ & $LP_t = V_t + DB_t^{\le0} + DB_t$ & p.\,\pageref{ref: LP_t} \\
& & & \\
\textbf{M} & & & \\
$\mu_k^t$ & fraction of bonus declarations from time $k$ paid out (or kept as surrender fee) at $t$ & -- & p.\,\pageref{ref: mu_k^t} \\
$MV_t$ & market value of the portfolio at time $t$ & $MV_t = \sum_{a\in\mathcal{A}_t}MV_t^a + C_t$  & p.\,\pageref{ref: MV_t} \\
$MV_t^a$ & market value of asset $a$ at time $t$ & -- & p.\,\pageref{ref: MV_t^a} \\
& & & \\
\textbf{N} & & & \\
$\nu$ & bonus declaration bound & -- & p.\,\pageref{ref: nu} \\
$\nu_t$ & declaration fraction of $ph_t^*$ & -- & p.\,\pageref{ref: nu_t} \\
& & & \\
\textbf{O} & & & \\
$O_s^+$ & value of the caplet with maturity $s-1$ & -- & p.\,\pageref{ref: O_s} \\
$O_s^-$ & value of the floorlet with maturity $s-1$ & -- & p.\,\pageref{ref: O_s} \\
& & & \\
\textbf{P} & & & \\
$P(t,s)$ & value of a zero coupon bond, with nominal of $1$ at $s$, at time $t$ & $P(t,s) = E[D(t,s)]$ & p.\,\pageref{ref: P(t,s)} \\
$PH^*$ & time value of the accounting flows $ph_t^*$ & $PH^* = E\left[\sum_{t=1}^T B_t^{-1}ph_t^*\right]$ & p.\,\pageref{ref: PH^*} \\
$ph_t$ & amount of discretionary benefits paid out at time $t$ & $ph_t = \sum_{x\in\mathcal{X}_t}ph_t^x$ & p.\,\pageref{ref: ph_t} \\
$ph_t^*$ & policyholder accounting flow at time $t$ & $ph_t^* = gph\cdot gs_t^+$ & p.\,\pageref{ref: ph_t^*} \\
$ph_t^x$ & cash flows due to $DB_{t-1}^x$ & -- & p.\,\pageref{ref: ph_t^x} \\
$pr_t$ & premium payments at time $t$ & -- & p.\,\pageref{ref: pr_t} \\
& & & \\
\textbf{R} & & & \\
$\rho_t$ & average technical interest rate at time $t-1$ & -- & p.\,\pageref{ref: rho_t} \\
$ROA_t$ & book value return at time $t$ & $ROA_t = \sum_{a\in\mathcal{A}_{t-1}}ROA_t^a + F_{t-1}C_{t-1}$ & p.\,\pageref{ref: ROA_t} \\
$ROA_t^a$ & book value return of asset $a$ at time $t$ & $ROA_t^a = cf_t^a + \Delta BV_t^a$ & p.\,\pageref{ref: ROA_t^a} \\
& & & \\
\textbf{S} & & & \\
$SF_t$ & surplus fund at time $t$ & -- & p.\,\pageref{ref: SF_t} \\
$sg_t^*$ & surrender fee at time $t$ & -- & p.\,\pageref{ref: sg_t^*} \\
$sh_t$ & share holder cash flow at time $t$ & $sh_t = gsh\cdot gs_t^+ - gs_t^-$ & p.\,\pageref{ref: sh_t} \\
& & & \\
\textbf{T} & & & \\
$T$ & projection horizon & -- & p.\,\pageref{ref: T} \\
$TAX$ & time value of tax & $TAX = E\left[\sum_{t=1}^T B_t^{-1}tax_t\right]$ & p.\,\pageref{ref: TAX} \\
$tax_t$ & tax cash flow at time $t$ & $tax_t = gtax\cdot gs_t^+$ & p.\,\pageref{ref: tax_t} \\
$tg_t$ & technical gains at time $t$ & -- & p.\,\pageref{ref: tg_t} \\
$\theta$ & surplus fund fraction & -- & p.\,\pageref{ref: theta} \\
& & & \\
\textbf{U} & & & \\
$UG_t$ & unrealized gains at time $t$ & $UG_t = MV_t - BV_t$ & p.\,\pageref{ref: UG_t} \\
$UG_t^a$ & unrealized gains of asset $a$ at time $t$ & $UG_t^a = MV_t^a - BV_t^a$ & p.\,\pageref{ref: UG_t^a} \\
& & & \\
\textbf{V} & & & \\
$V_t$ & mathematical reserves at time $t$ & $V_t = \sum_{x\in\mathcal{X}_t}V_t^x$ & p.\,\pageref{ref: V_t} \\
$V_t^x$ & mathematical reserve of model point $x$ at time $t$ & -- & p.\,\pageref{ref: V_t^x} \\
$VIF$ & value of in-force business & $VIF = E\left[\sum_{t=1}^T B_t^{-1}sh_t\right]$ & p.\,\pageref{ref: VIF} \\
& & & \\
\textbf{X} & & & \\
$\mathcal{X}_t$ & set of model points active at time $t$ & -- & p.\,\pageref{ref: X_t} 
\end{longtable}

\end{document}